\theoremstyle{plain}
\newtheorem{theorem}{Theorem}[section]
\newtheorem{lemma}[theorem]{Lemma}
\newtheorem{proposition}[theorem]{Proposition}
\theoremstyle{definition}
\theoremstyle{remark}
\newcommand{\ceil}[1]{\left\lceil #1 \right\rceil}
\newcommand{\ld}{\log_{2}}
\newcommand{\loq}{\log_{\varphi}}
\newcommand{\delay}{\mathrm{delay}}
\newcommand{\floor}[1]{\left\lfloor #1 \right\rfloor}
\newcommand{\sset}[1]{\left\{#1\right\}}
\definecolor{darkRed}{rgb}{0.6,0,0}
\definecolor{lightRed}{rgb}{1,0.75,0.75}
\definecolor{darkGreen}{rgb}{0,0.5,0}
\definecolor{PineGreen}{rgb}{0.01,0.5,0.45}
\definecolor{darkBlue}{rgb}{0,0,0.75}
\definecolor{CornflowerBlue}{rgb}{0.15,0,0.7}
\definecolor{lightBlue}{rgb}{0.75,0.75,1}
\definecolor{grey}{rgb}{0.5,0.5,0.5}
\definecolor{black}{rgb}{0,0,0}
\definecolor{red}{rgb}{1,0,0}
\definecolor{green}{rgb}{0,1,0}
\definecolor{blue}{rgb}{0,0,1}
\definecolor{yellow}{rgb}{1,1,0}
\definecolor{orange}{rgb}{1,0.6,0}
\definecolor{cyan}{rgb}{0,0.7,1}
\definecolor{purple}{rgb}{0.5,0,0.8}
\makeatletter \tikzset{circle split part fill/.style args={#1,#2}{%
    alias=tmp@name, 
    postaction={%
      insert path={ \pgfextra{%
          \pgfpointdiff{\pgfpointanchor{\pgf@node@name}{center}}%
          {\pgfpointanchor{\pgf@node@name}{east}}%
          \pgfmathsetmacro\insiderad{\pgf@x} \fill[#1]
          (\pgf@node@name.base)
          ([xshift=-\pgflinewidth]\pgf@node@name.east) arc
          (0:180:\insiderad-\pgflinewidth)--cycle; \fill[#2]
          (\pgf@node@name.base)
          ([xshift=\pgflinewidth]\pgf@node@name.west) arc
          (180:360:\insiderad-\pgflinewidth)--cycle; }}}}}
\title{Fast Prefix Adders\\ for Non-Uniform Input Arrival Times}
\author{Stephan Held and Sophie Theresa Spirkl \\
\normalsize {\tt \{held,spirkl\}@or.uni-bonn.de}\\
  Research Institute for Discrete Mathematics, University of Bonn}
\date{}
\begin{document}

\maketitle

\begin{abstract}
\noindent {\bf  Abstract}

  We consider the problem of constructing fast and small parallel
  prefix adders for non-uniform input arrival times. This problem
  arises whenever the adder is embedded into a more complex circuit,
  e.\ g.\ a multiplier.
  
  Most previous results are based on representing binary
  carry-propagate adders as so-called parallel prefix graphs, in which pairs of
  generate and propagate signals are combined using complex gates
  known as prefix gates. Adders constructed in this model usually
  minimize the delay in terms of these prefix gates. However, the delay
  in terms of logic gates can be worse by a factor of two.
  
  In contrast, we aim to minimize the delay of the underlying logic
  circuit directly.  We prove a lower bound on the delay of a carry
  bit computation achievable by any prefix carry bit circuit and
  develop an algorithm that computes a prefix carry bit circuit with
  optimum delay up to a small additive constant. Furthermore, we use
  this algorithm to construct a small parallel prefix adder.

  Compared to existing algorithms we simultaneously improve the delay
  and size guarantee, as well as the running time for constructing
  prefix carry bit and adder circuits.
\end{abstract}

\section{Introduction }

The addition of binary numbers is one of the most fundamental
computational tasks performed by computer chips.
Given two binary addends $A = (a_n \dots a_1)$ and $B = (b_n \dots b_1)$,
where index $n$ denotes the most significant bit, their sum $S = A+B$
has $n+1$ bits. For each position $1 \leq i \leq n$, we compute a
\emph{generate signal} $g_i$ and a \emph{propagate signal}
$p_i$, which are defined as follows:
\begin{equation}
\begin{array}{rl}
g_i &= a_i \wedge b_i,\\
p_i &= a_i \oplus b_i,
\end{array}
\label{eqn:generate-and-propagat}
\end{equation}
where $\wedge$ and $\oplus$ denote the binary \textsc{And} and
\textsc{Xor} functions.  The \emph{carry bit} at position $i+1$ can be
computed recursively as $c_{i+1} = g_i \vee (p_i \wedge c_i)$ \cite{Kno01,WS58}.
From the carry bits, we can compute the output $S$ via $s_i = c_i \oplus p_i$ for $1 \leq i\leq n$ and $s_{n+1} = c_{n+1}$.

For two pairs $(g_i, p_i)$ and $(g_j, p_j)$ of generate and propagate
signals, we define a binary \emph{prefix operator} as
\begin{equation}
{g_i \choose p_i} \circ {g_j \choose p_j} = {g_i \vee (p_i \wedge g_j)
  \choose p_i \wedge p_j}.
\label{eqn:prefix-operator}
\end{equation}

This operator is associative, and it can be used to compute the
carry bit $c_{i+1}$ using the identity
$${c_{i+1} \choose p_i \wedge p_{i-1} \wedge \dots \wedge p_1} = {g_i
  \choose p_i} \circ {g_{i-1} \choose p_{i-1}} \circ \dots \circ {g_1
  \choose p_1}.$$ 

The prefix operator allows us to simplify notation by combining
generate and propagate signals into a single term $z_i= (g_i,p_i)$ and
computing $c_{i+1}$ as the first component of $z_i \circ \dots \circ
z_1$.  Figure~\ref{fig:pfx-to-gate3} shows a prefix gate computing $z
\circ z'$ for the prefix operator in (\ref{eqn:prefix-operator}) on
the left and its underlying logic circuit on the right.

  \begin{figure}[!tb]%
    \centering{%
      \resizebox{0.4\linewidth}{!}{%
        \begin{tikzpicture}
          \input{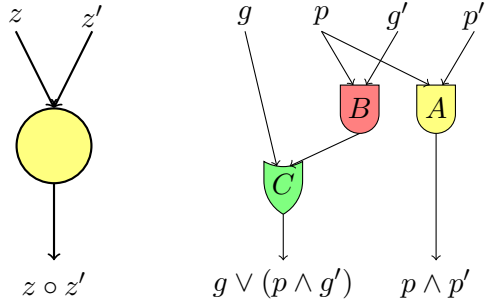}
        \end{tikzpicture}%
      }%
      \caption{Prefix gate and underlying logic circuit}%
      \label{fig:pfx-to-gate3}%
    }%
  \end{figure}%

Formally, a \emph{logic circuit} is a non-empty connected acyclic
directed graph consisting of nodes that are either \emph{inputs} with
at least one outgoing edge and no incoming edges, \emph{outputs} with
exactly one incoming edge and no outgoing edges, or \emph{gates} with
one or two incoming edges representing one of the 2-bit logical
functions \textsc{And} ($\wedge$), \textsc{Or} ($\vee$), \textsc{Xor}
($\oplus$), \textsc{Not} and their negations.%

The number of gates is the \emph{size} of the
circuit. The \emph{(maximum) fan-out of the circuit} is the maximum
fan-out (out-degree) of its nodes. The \emph{depth of the
  circuit} is the maximum number of gates on a directed path.

A logic circuit with inputs $g_1,p_1,\dots,g_n,p_n$ is called 
a \emph{prefix carry bit circuit}  if it  computes  $c_{n+1}$ and $p_1\wedge\dots\wedge p_n$, 
it is built from prefix operator gadgets in Figure~\ref{fig:pfx-to-gate3}, and the subcircuit computing $p_1\wedge\dots\wedge p_n$ is a tree.
Similarly, a \emph{prefix adder}  is a logic circuit built using the gadgets in Figure~\ref{fig:pfx-to-gate3}
that computes  $c_{i+1}$  and $p_1\wedge\dots\wedge p_i$ for all $i=1,\dots,n$ at its $2n$ outputs.

A graph that arises from a prefix carry bit circuit by contracting
each gadget into a \emph{prefix gate} as in
Figure~\ref{fig:pfx-to-gate3}, and by contracting all input pairs
$(g_i,p_i)$ into $z_i$ and the output pair
$(c_{n+1},p_n\wedge\dots\wedge p_1)$, is called a \emph{prefix tree}.
Likewise, a \emph{parallel prefix graph} arises from a prefix adder by
contracting each gadget, all input pairs $(g_i,p_i)=z_i$ and output
pairs $(c_{i+1},p_1\wedge\dots\wedge p_i)$ for all $i = 1,\dots,n$.
For inputs $z_1, \dots, z_n$, a prefix tree computes the last
carry bit of an addition $z_n \circ \dots \circ z_1$, while a
\emph{parallel prefix graph} computes $z_i \circ \dots \circ z_1$ for
all $1 \leq i \leq n$, i.\ e.\ all carry bits of an addition.

When aiming for a bounded fan-out, we allow the use of repeater gates
with fan-in one (a single incoming edge) and fan-out at least two in
all types of circuits and graphs.

An example of the transition between parallel prefix graphs and prefix
adders is given in Figure~\ref{fig:transition}. On the left the serial
parallel prefix graph with depth 3 is replaced by an
\textsc{And-Or}-path with logic circuit depth 6 known as the
ripple-carry adder. For the Kogge-Stone parallel prefix graph
\cite{kogge-stone} on the right, the depth increases from two to four
and the maximum fan-out increases from two to three.
\begin{figure}[tbh]
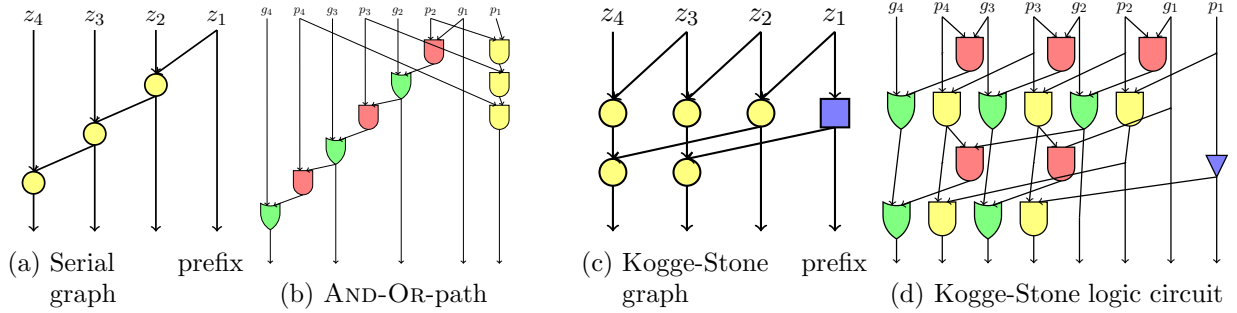

  \centering
  \begin{subfigure}[b]{.19\linewidth}
    
  \centering{
    \resizebox{1\linewidth}{!}{
      \begin{tikzpicture}
        \input{serial-small}
      \end{tikzpicture}
    }
    \caption{Serial prefix graph}
    \label{fig:serial-small}%
  }

  \end{subfigure}
  \begin{subfigure}[b]{.21\linewidth}
    
  \centering{
    \resizebox{1\linewidth}{!}{
      \begin{tikzpicture}
        \input{and-or-path}
      \end{tikzpicture}
    }
    \caption{\textsc{And-Or}-path}
    \label{fig:and-or-path}%
  }

  \end{subfigure}
  \quad\quad
  \begin{subfigure}[b]{.23\linewidth}
    
  \centering{
    \resizebox{1\linewidth}{!}{
      \begin{tikzpicture}
        \input{kogge-small}
      \end{tikzpicture}
    }
    \caption{Kogge-Stone prefix graph}
    \label{fig:kogge-small}%
  }

  \end{subfigure}
  \begin{subfigure}[b]{.29\linewidth}
    
  \centering{
    \resizebox{1\linewidth}{!}{
      \begin{tikzpicture}
        \input{kogge-logic}
      \end{tikzpicture}
    }
    \caption{Kogge-Stone logic circuit}
    \label{fig:kogge-logic}%
  }

  \end{subfigure}
  \caption{Prefix graphs as logic circuits}
  \label{fig:transition} 
  \label{fig:kogge-transition} 
\end{figure}

Additions are typically not performed as isolated
tasks, but the input signals result from preceding computational
stages and become available at different fixed \emph{arrival times}
$t_i\in \mathbb{N}_0$ $(i\in \{1,\dots,n\}$), e.\ g.\ when used within a
multiplier.  Here we make the simplifying assumption that $g_i$ and
$p_i$ have the same arrival time at the inputs, which is essentially
fulfilled if they are generated as in
(\ref{eqn:generate-and-propagat}).  We define the delay of a directed
path in a  logic circuit starting at an input as its depth plus its input
arrival time.  The delay of a vertex is the maximum delay of a path
ending in the vertex and the \emph{delay of the circuit} is the
maximum delay of its outputs. Depth and delay coincide if all input arrival times are zero.
Paths and outputs attaining the delay of
the circuit are called \emph{critical}.  The delay of all vertices can
be computed in linear time by a longest path computation in an acyclic
network. 

\begin{figure}[tbh]
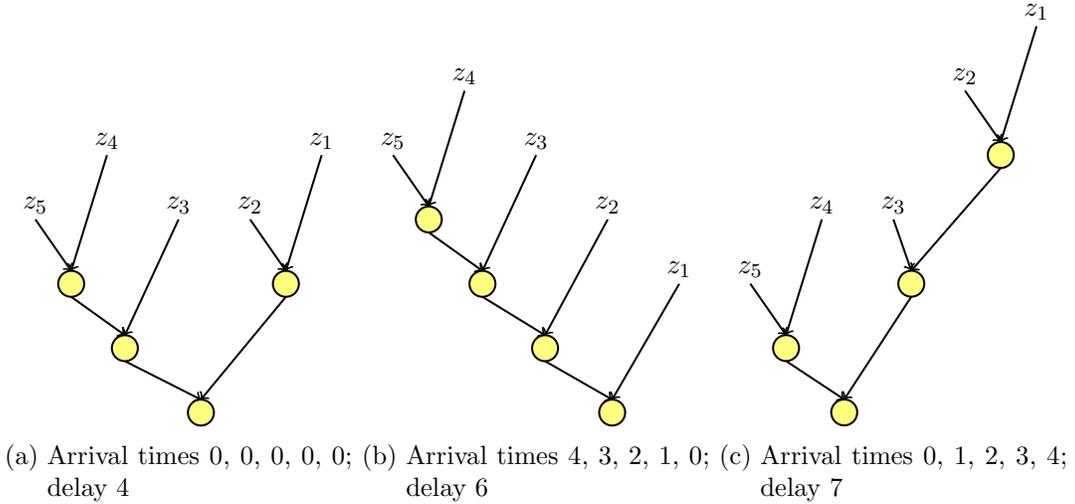

  \centering
  \begin{subfigure}[b]{.28\linewidth}
    
  \centering{
    \resizebox{1\linewidth}{!}{
      \begin{tikzpicture}
        \input{example1}
      \end{tikzpicture}
    }
    \caption{Arrival times 0, 0, 0, 0, 0; delay 4}
    \label{fig:example1}%
  }

  \end{subfigure}
  \begin{subfigure}[b]{.28\linewidth}
    
  \centering{
    \resizebox{1\linewidth}{!}{
      \begin{tikzpicture}
        \input{example2}
      \end{tikzpicture}
    }
    \caption{Arrival times 4, 3, 2, 1, 0; delay 6}
    \label{fig:example2}%
  }

  \end{subfigure}
  \begin{subfigure}[b]{.28\linewidth}
    
  \centering{
    \resizebox{1\linewidth}{!}{
      \begin{tikzpicture}
        \input{example3}
      \end{tikzpicture}
    }
    \caption{Arrival times 0, 1, 2, 3, 4; delay 7}
    \label{fig:example3}%
  }

  \end{subfigure}
  \caption{Different arrival times profiles and their optimum prefix trees}
  \label{fig:at-ex}
\end{figure}

In Figure~\ref{fig:at-ex}, we show an example with five inputs and its
optimum solutions for different arrival time patterns. Each tree is
optimal for neither of the other two arrival times sequences.

We aim for a prefix carry bit circuits and adders with close to minimum delay
and small size. 

A minimum-depth prefix graph for uniform input arrival times is given
by \cite{kogge-stone, ladner} and has depth $\lceil\ld n\rceil$ in
terms of prefix gates, but a non-minimal depth of $2\lceil{\ld
  n}\rceil$ as a logic circuit. For non-uniform arrival times, these
circuits might be by a factor of three worse than the lower bound, for
example for the arrival time pattern $t_1 = \ld n$ and $t_2 = \dots =
t_n = 0$. In Figure~\ref{fig:kogge-small}, if $z_1$ has arrival time
two and all other arrival times are zero, the delay of
Figure~\ref{fig:kogge-logic} is $6$.

Parallel prefix graphs minimizing the overall prefix graph delay for
special input arrival time patterns that occur mostly in certain
multipliers were presented in \cite{zimmermann,okl}.

An algorithm for constructing optimum-delay parallel prefix graphs for
arbitrary non-uniform input arrival times is given in \cite{choi},
however this approach may require $\mathcal{O}(n^2)$ gates for a full
$n$-bit adder.  In \cite{roy} parallel prefix adders are enumerated
with heuristic pruning to achieve good performance-area tradeoffs in
practice.  All these approaches minimize the delay of the prefix graph
rather than the underlying logic circuit.  As the prefix operator
contains two subsequent gates, the resulting delay of the underlying
circuit may be worse by a factor of two.

As it is common practice in logic synthesis
\cite{kogge-stone,ladner,bonnlogic,roy}, we use a simple technology-independent circuit and delay model in this work.  In
hardware, the delay of a gate certainly depends on its physical
structure.
In CMOS technology, \textsc{Nand}/\textsc{Nor} gates are faster than
\textsc{And}/\textsc{Or} gates and efficient implementations exist for
integrated multi-input {\sc And-Or}-Inversion gates and {\sc Or-And}-Inversion
gates. 
We assume that circuits are re-mapped into logically equivalent circuits
based on technology specific delays in a {\it technology mapping} step
\cite{keutzer88,Chatterjee+techmap2006} after logic synthesis.
Despite its simplicity, the simple circuit and delay model is
successfully used in practice for re-optimizing carry bit functions
even late in the design flow \cite{bonnlogic}.

\subsection{Our contribution}
We will use the delay properties of the prefix operator
(\ref{eqn:prefix-operator}) aiming to minimize the delay of logic
circuits for additions instead of the corresponding prefix
graphs. This idea was used by \cite{bonn1}, who proposed a cubic-time
dynamic programming algorithm to compute a fast carry bit circuit.

With a deeper structural analysis of near-optimum prefix trees in
Section~\ref{sec:bl-core}, we can construct a carry bit circuit with a
better delay bound, size, and running time as shown in the
rows with type ``Carry'' of Table~\ref{tbl:bl-improve}.

In Section~\ref{sec:bl-adder}, we apply the carry bit algorithm to
substantially improve the delay bound given in \cite{bonn2} for a full
$n$-bit adder with input arrival times $t_1, \dots, t_n \in
\mathbb{N}_0$. The result is listed in the rows with type ``Adder'' in
Table~\ref{tbl:bl-improve}.

\begin{table}[!b]
\small
  \begin{center}
  {
      \begin{tabular}{|c|c|c|c|c|c|}
        \hline
                     & \textbf{Type}  & \textbf{Delay} & \textbf{Size} & \textbf{Fan-out} & \textbf{Running Time} \\
        \hline
        \cite{bonn1}       & Carry  & $1.441W+3$ & $4n-3$ & $1,2,3$  &  $\mathcal{O}(n^3)$ / $\mathcal{O}(n^3\log n)$ \\
        \textbf{Here} & Carry & $1.441W+2.674$ & $3n-3$ & $1,2$ &  $\mathcal{O}(n \log n)$ / $\mathcal{O}(n \log^2 n)$ \\
        \hline
        \cite{bonn2}       & Adder  & $2W + 6 \ld \ld n +
        \mathcal{O}(1)$ & $6n\ld \ld n$ & $\sqrt{n} + 1$  & $\mathcal{O}(n^2)$ / $\mathcal{O}(n^2\log n)$\\
        \textbf{Here} & Adder  & $1.441W+5 \ld \ld n + 4.5$
        & $6n\ld \ld n$ & $\sqrt{n} + 1$  & $\mathcal{O}(n \log n)$ / $\mathcal{O}(n \log^2 n)$ \\
        \hline
      \end{tabular}
}
\end{center}
\caption{Improvements over \cite{bonn1,bonn2}, where $W = \ld \left(\sum_{i=1}^n 2^{t_i}\right)$ is a lower bound for the delay. Running times 
assume  constant/linear time for binary addition.}
\label{tbl:bl-improve}
\end{table}

Finally, in Section~\ref{sec:lb}, we prove a lower bound on the delay
of any prefix carry bit circuit, which shows that our carry bit
algorithm is delay-optimal up to an additive constant of 5.

\section{Algorithm for Single Carry Bit Circuits} \label{sec:bl-core} 

We start with a method that given a parallel prefix graph allows us to
compute the delay of the underlying logic circuit up to an
additive error of one.

\begin{proposition} \label{lem:pfx-to-logic} Given a parallel prefix graph
  or prefix tree, we propagate the arrival times (which might all be
  zero) through the prefix gates so that the delay $t$ of a gate with
  left input (higher indices) $l$ and right input (lower indices) $r$
  with delay $t_l$ and $t_r$, respectively, is defined as $t =
  \max\{t_r+2, t_l+1\}.$ Let $d$ be the maximum delay computed with
  this procedure, maximized over all gates, inputs and outputs, then
  the delay $D$ of the logic circuit corresponding to the given prefix
  graph or prefix tree satisfies
  $d \leq D \leq d+1.$
\end{proposition}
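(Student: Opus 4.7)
The plan is to carry out a structural induction on the prefix graph, tracking the actual logic-circuit delays of the two outputs of every prefix-gate gadget separately (the generate signal $G_v$ and the propagate signal $P_v$) and comparing them to the single formula value $t(v)$. Inspection of the gadget in Figure~\ref{fig:pfx-to-gate3} first yields the recurrences: for an internal node $v$ with left child $v_l$ (higher indices) and right child $v_r$,
\[
\delay(G_v) = \max\sset{\delay(G_{v_l}) + 1,\; \delay(P_{v_l}) + 2,\; \delay(G_{v_r}) + 2}
\]
because $g$ passes only through the top OR while $p$ and $g'$ traverse the AND--OR chain, and
\[
\delay(P_v) = \max\sset{\delay(P_{v_l}),\; \delay(P_{v_r})} + 1,
\]
since the AND producing $P_v$ is fed by the two propagate signals only. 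For input nodes, both delays equal $t_v = t(v)$.

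For the upper bound $D \leq d+1$ I would prove by induction along the prefix graph the paired invariant
\[
\delay(G_v) \leq t(v) + 1 \qquad\text{and}\qquad \delay(P_v) \leq t(v).
\]
Substituting the hypothesis into the $G$-recurrence bounds $\delay(G_v)$ by $\max\sset{t(v_l) + 2,\; t(v_r) + 3}$, which by the formula for $t(v)$ equals exactly $t(v) + 1$; the $P$-recurrence similarly gives $\delay(P_v) \leq \max\sset{t(v_l) + 1,\; t(v_r) + 1} \leq t(v)$. Since every output of the underlying logic circuit is either a $G_v$ or a $P_v$ at an output node of the prefix graph, this yields $D \leq \max_v t(v) + 1 = d + 1$.

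For the lower bound $d \leq D$ I would prove by the same induction the weaker invariant $\delay(G_v) \geq t(v)$, which follows by retaining just the first and third terms of the $G$-recurrence and applying the induction hypothesis. To convert this into $D \geq d$ I would observe that the formula $t(v) = \max\sset{t(v_r) + 2,\; t(v_l) + 1}$ is strictly increasing along edges of the prefix graph, so the maximum $d$ is attained at some node $v_*$ whose signals are outputs of the logic circuit; then $D \geq \delay(G_{v_*}) \geq t(v_*) = d$.

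The main obstacle, and the source of the $+1$ slack in the upper bound, is selecting the correct asymmetric pair of invariants: the formula compresses two genuinely distinct delays into one number, and only the assignment giving slack $1$ to $G_v$ and slack $0$ to $P_v$ makes the induction close, because the term $\delay(P_{v_l}) + 2$ in the $G$-recurrence can exceed $t(v_l) + 1$ by one. The slack cannot be avoided in general, as already a single prefix gate combining two inputs with $t_l > t_r$ produces $\delay(G_v) = t_l + 2$ while the formula yields $t(v) = t_l + 1$.
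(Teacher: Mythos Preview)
Your proof is correct and follows essentially the same approach as the paper: both arguments hinge on the observation that the formula $t(v)$ tracks the generate signal exactly once the propagate signal is assumed to lag it by one unit. The paper packages the upper bound via a ``skewed arrival time'' shift (setting $p_i$ to arrive at $t_i-1$, proving the formula is then exact for $g$, and shifting back) and the lower bound by physically deleting the propagate circuitry to expose a generate-only tree, whereas you carry the same content through the cleaner paired invariants $\delay(G_v)\le t(v)+1$, $\delay(P_v)\le t(v)$ and $\delay(G_v)\ge t(v)$; the substance is identical.
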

\begin{proof}
  This is a  consequence of a longest path computation in acyclic networks.
  We construct a logic circuit from the prefix graph. For every input
  pair $(g, p)$ corresponding to an input with arrival time $t$, we
  set the arrival time of $g$ and $p$ to be $t$ and $t-1$,
  respectively. We prove by induction that for every signal pair $(g,
  p)$ corresponding to a signal $z$ computed in the prefix graph, $g$
  has delay at least one more than $p$, and if $z$ is not an input,
  the delay of $g$ is the maximum of two plus the delay of the
  generate signal of its right predecessor and one plus the delay of
  the generate signal of its left predecessor. This is clear for
  inputs. Now consider a signal $z \circ z'$ as in
  Figure~\ref{fig:pfx-to-gate3}. Let $t_{g}, t_{p}, t_{g'},$ and
  $t_{p'}$ denote the delay of $g, p, g'$ and $p'$,
  respectively. By induction hypothesis, $t_{p'} + 1 \leq t_{g'}$
  and $t_{p} + 1 \leq t_{g}$. Therefore, $g \vee (p \wedge
  g')$ has delay $\max \sset{t_{p} + 2, t_{g'} + 2, t_{g} + 1} =
  \max \sset{t_{g'} + 2, t_{g} + 1}$. Furthermore, $\max
  \sset{t_{p} + 2, t_{g'} + 2, t_{g} + 1} \geq 1 + \max
  \sset{t_{p} + 1, t_{p'} + 1}$, which proves that $p' \wedge p$
  is indeed by at least one time unit earlier. 

  Inductive application of the argument above yields that the generate
  signal of every output arrives at time $\leq d$ under the assumption
  that all propagate signals of the inputs arrive one time unit
  earlier than their actual arrival time. Shifting all computed delays
  up by one time unit yields that the delay of the logic circuit for
  the actual arrival times is at most $d+1$.

  To show that $d \leq D$, consider only the generate signals, i.\ e.\
  using the notation of Figure~\ref{fig:pfx-to-gate3}, consider a
  logic circuit $G$ in which all gates of type $A$ and inputs $p_i$
  computing propagate signals are removed, and gates of type $B$ are
  replaced by repeaters. It follows that for every output $c$, the
  subcircuit of $G$ which consists of all ancestors of $c$ is a
  tree. This is certainly true in the parallel prefix graph, and after
  the removal of propagate signals, every prefix gate internally
  corresponds to a tree as well. Removing gates and inputs from a
  circuit does not increase its delay, because a critical path in $G$
  is also a path in the original circuit.

  Computing the delay of a signal in a tree is easy: when combining
  the generate signals of two inputs, one of them has to pass through
  two gates (a repeater instead of $B$, and $C$ and the other has to
  pass through only one (namely $C$). This shows that the given method
  for computing $d$ indeed yields a lower bound.
\end{proof}
For uniform arrival times, $d = D$, but for arbitrary arrival times,
$D = d+1$ is possible, for example by choosing the arrival times of
$z$ and $z'$ in Figure~\ref{fig:pfx-to-gate3} as 1 and 0, respectively. 

The prefix graph and its underlying logic circuit can
vary greatly in depth and delay. For example, a prefix graph of
optimal depth $\ceil{\ld n}$ as in Figure~\ref{fig:kogge-small}
contains a balanced binary tree computing its last output, therefore
its logic circuit depth is $2\ceil{\ld n}$. However, the depth only
doubles for the lower (right) input of a prefix gate by
Lemma~\ref{lem:pfx-to-logic}, which we exploit in the following.

For a single carry bit computation with arrival times, Rautenbach et
al.\ \cite{bonn1} give a dynamic programming algorithm with cubic
running time.  The algorithm restructures an \textsc{And-Or}-path
similar to a prefix tree. Here the right-to-left ordering of the
leaves of this tree is fixed as $z_1, \dots, z_n$, because $\circ$ is
not commutative. The algorithm recursively splits the sequence of
inputs into two parts at an index $l$ attaining the minimum in the
recursive delay function
\begin{equation}
\mathcal{D}(t_1, \dots, t_n) = \min_{l=1, \dots,
  n-1} \max \left\{\mathcal{D}(t_1, \dots, t_l) +2,
  \mathcal{D}(t_{l+1}, \dots, t_n) + 1\right\}.
\label{eqn:bl-dp}
\end{equation}
This solution can be computed for every subsequence $t_i, t_{i+1}, \dots, t_j$ of indices
via dynamic programming by choosing the $\mathcal{D}$-optimum position
$l$ at which to split the sequence, which yields the following
result. 
\begin{theorem}[{\cite{bonn1}}] \label{thm:bl-main} For $n$ input pairs $(g_i, p_i)$ for $1 \leq i \leq n$
  with arrival times $t_1, \dots, t_n \geq 0$, there is a logic circuit computing the carry bit $c_{n+1}$ with
\begin{equation} \delay(c_{n+1}) \leq 1.441
  \ld \left(\sum_{i=1}^n 2^{t_i}\right) + 3.
  \label{eqn:bl-delay-bound}
\end{equation} This circuit can be
  constructed in $\mathcal{O}(n^3)$ time. It has size at most $4n -
  3$, and its maximum fan-out is bounded by two at all gates and
  bounded by three at all inputs.
\end{theorem}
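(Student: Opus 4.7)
The plan is to build an optimum prefix tree with respect to the recurrence~(\ref{eqn:bl-dp}) by dynamic programming, translate its logic-circuit delay through Proposition~\ref{lem:pfx-to-logic}, and finally bound the resulting $\mathcal{D}$ in terms of $W$. Any prefix tree computing $z_n\circ\cdots\circ z_1$ splits at some index $l$ so that its root combines the right subtree computing $z_l\circ\cdots\circ z_1$ with the left subtree computing $z_n\circ\cdots\circ z_{l+1}$. By the $+2/+1$ right/left delay asymmetry of Proposition~\ref{lem:pfx-to-logic}, the depth of the logic circuit under the root satisfies exactly the recurrence~(\ref{eqn:bl-dp}), so the combinatorial target is $\mathcal{D}$. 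It therefore suffices to tabulate $\mathcal{D}(t_i,\dots,t_j)$ over all $O(n^2)$ contiguous subsequences, testing each of the $O(n)$ candidate split positions; this yields the claimed $O(n^3)$ running time, and Proposition~\ref{lem:pfx-to-logic} guarantees an underlying logic circuit of delay at most $\mathcal{D}+1$.

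The main obstacle is the delay bound. Set $\varphi=(1+\sqrt{5})/2$, so that $\varphi^{-1}+\varphi^{-2}=1$ and $1/\ld\varphi\approx 1.4404$. I plan to prove by induction on $n$ that $\mathcal{D}(t_1,\dots,t_n)\leq \loq\!\bigl(\sum_{i=1}^n 2^{t_i}\bigr)+c$ for some absolute constant $c$, and then convert $\loq$ to $\ld$ to pick up the prefactor $1.441$. For the inductive step, let $W=\sum_i 2^{t_i}$ and let $l$ be the largest index with $\sum_{i=1}^l 2^{t_i}\leq\varphi^{-2}W$. Then $W_R:=\sum_{i=1}^l 2^{t_i}\leq \varphi^{-2}W$, and by maximality of $l$ the left weight satisfies $W_L\leq \varphi^{-1}W+2^{t_{l+1}}$. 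Applying the induction hypothesis to both halves and using the identities $2+\loq\varphi^{-2}=0$ and $1+\loq\varphi^{-1}=0$, both arguments of the $\max$ in~(\ref{eqn:bl-dp}) are bounded by $\loq W+c$, up to the overshoot $2^{t_{l+1}}$ on the left. Absorbing this overshoot is the delicate point: when no single $2^{t_i}$ dominates $W$, the overshoot is a constant factor of $W$ and can be folded into $c$; when some $2^{t_i}$ does dominate $W$, I would place that input as a direct child of the root so that it incurs only the $+1$ penalty, reducing to an $(n-1)$-input subproblem. This case split, the $+1$ slack from Proposition~\ref{lem:pfx-to-logic}, and the base of the induction together account for the additive constant $3$ in~(\ref{eqn:bl-delay-bound}).

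The size and fan-out bounds follow by inspection of the tree structure. Each prefix gate gadget in Figure~\ref{fig:pfx-to-gate3} contains three logic gates ($A$, $B$, $C$), and a binary tree with $n$ leaves has $n-1$ internal nodes, so the prefix tree itself contributes $3(n-1)$ gates; the remaining gates needed to reach the $4n-3$ bound arise from repeater/inverter gadgets inserted where the fan-out at an internal signal or input would otherwise exceed the allowed bound. For the fan-out, every internal signal pair $(g,p)$ in the prefix tree has a unique parent prefix gate, so after decontracting gadgets each internal $g$ and $p$ is read at most twice within that gadget, giving fan-out two at internal gates; each input $p_i$ picks up one additional read from its role in forming the overall propagate output, yielding the leaf fan-out bound of three.
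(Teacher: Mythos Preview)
The paper does not give its own proof of this theorem; it is quoted from \cite{bonn1}, with only the dynamic-programming description around \eqref{eqn:bl-dp} serving as context. Your description of the $\mathcal{O}(n^3)$ dynamic program and the use of Proposition~\ref{lem:pfx-to-logic} to pass from $\mathcal{D}$ to the logic-circuit delay match that context and are fine.

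The substantive gap is in your inductive proof of the delay bound, specifically the ``overshoot'' case. You propose that when some single $2^{t_i}$ dominates $W$, you place that input as a direct child of the root so it incurs only the $+1$ penalty. But the prefix operator is associative and \emph{not} commutative: the root's left child must compute $z_n\circ\cdots\circ z_{l+1}$ for some $l$, so the only input that can appear alone as a left child of the root is $z_n$. The input causing the overshoot is $z_{l+1}$, which for generic $l$ sits in the middle of the sequence and cannot be isolated at the root. If instead you push $z_{l+1}$ one level down (as the right child inside the left subtree), it picks up a $+2$ there and a $+1$ at the root, i.e.\ $t_{l+1}+3$, and since only $2^{t_{l+1}}\le W$ is guaranteed, this yields $\loq W + 3$ for $\mathcal{D}$ and hence $\loq W + 4$ for the logic delay after Proposition~\ref{lem:pfx-to-logic}, overshooting \eqref{eqn:bl-delay-bound}. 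Trying the alternate split at $l+1$ fixes the left weight but can make the right weight nearly all of $W$, so the symmetric difficulty reappears. In short, the weight-splitting induction does not close with the constant you need without a more careful two-split argument or a different invariant; as written, the case analysis is not a proof.

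For comparison, the paper's own route to the (slightly sharper) bound in Theorem~\ref{thm:my-pfx} avoids this difficulty entirely: Lemma~\ref{lem:fib-pfx} builds an explicit Fibonacci tree with $F_k$ leaves and depth $k-1$, assigns $F_{t_i+3}-1$ consecutive leaves to input $i$, and shows each input owns a full subtree of the right depth. The delay bound then falls out of $\sum_i(F_{t_i+3}-1)\le F_k$ together with $F_{t_i+3}-1\le 2^{t_i}$, with no inductive splitting and no overshoot to absorb. If you want to rescue your approach, you would need to strengthen the induction hypothesis (for instance, carry an integer invariant tied to Fibonacci numbers rather than the real-valued $\loq W$) so that the boundary term $2^{t_{l+1}}$ is already accounted for.

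On size and fan-out: your $3(n-1)$ count for the prefix-tree gadgets is correct (cf.\ Lemma~\ref{lem:small-bl}). Your explanation of the extra $n$ gates and of the fan-out~$3$ at inputs is speculative; the paper simply notes that \emph{its} definition of a prefix tree shaves $n$ gates off the \cite{bonn1} bound without detailing where those gates sat in the original construction, so ascribing them to repeaters is plausible but not something you can verify from this paper alone.
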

Using our definition of a prefix tree, the size of the carry bit circuit can be reduced by $n$.
\begin{lemma} \label{lem:small-bl} Any prefix tree computing a single
  carry bit has an underlying  logic circuit  size of at most $3n-3$ and an underlying 
  maximum fan-out of two.
\end{lemma}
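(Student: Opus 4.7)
The plan is to prove both bounds by pure structural counting, leveraging the definition of a prefix tree as the central fact: it is a binary tree on the $n$ leaves $z_1,\dots,z_n$, so it has exactly $n-1$ internal nodes, each a prefix gate. The key distinction from a general parallel prefix graph is that in a tree every prefix-gate output $(g,p)$ is consumed by at most one parent gate, which eliminates any reuse and hence any inflation of fan-out or size.

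First, for the size bound, I would expand each prefix gate into its gadget from Figure~\ref{fig:pfx-to-gate3}, which contains exactly three logic gates: the AND $p\wedge g'$, the OR producing the new generate signal, and the AND $p\wedge p'$ producing the new propagate signal. The root gadget produces the two outputs $c_{n+1}$ and $p_1\wedge\dots\wedge p_n$ directly, so no extra gates are needed to collect the propagate product. With $n-1$ prefix gates this yields $3(n-1)=3n-3$ logic gates, which is exactly the gain of $n$ over Theorem~\ref{thm:bl-main} (where a separate tree for $p_1\wedge\dots\wedge p_n$ is attached).

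Second, for the fan-out I would do a short case analysis on each logic gate in a gadget. Inside a gadget, the left $p$-input is the only signal used twice: it feeds both the AND $p\wedge g'$ and the AND $p\wedge p'$. The left $g$-input, the right $g'$-input, the right $p'$-input, and the output of the inner AND $p\wedge g'$ each feed exactly one gate. Since a prefix tree passes the output $(g,p)$ of a gadget to at most one subsequent gadget, the OR producing the new $g$-signal has fan-out at most $1$, the inner AND $p\wedge g'$ has fan-out $1$, and the AND $p\wedge p'$ producing the new $p$-signal has fan-out at most $2$ (achieved when the parent gadget consumes this signal as its left $p$-input). Each input $g_i$ is consumed by exactly one gadget with fan-out $1$, and each $p_i$ by exactly one gadget with fan-out $1$ or $2$. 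The main obstacle is essentially careful bookkeeping rather than any structural insight: once one observes that the tree property forces every prefix-gate output to be read at most once, listing the four signal types inside a single gadget yields both bounds immediately, and no repeaters are needed.
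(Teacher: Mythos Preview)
Your proposal is correct and follows the same approach as the paper's proof, which is the one-line observation that a prefix tree on $n$ leaves has exactly $n-1$ prefix gates (hence $3(n-1)$ logic gates). Your version is simply more explicit, in particular spelling out the fan-out case analysis that the paper leaves implicit.
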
 
\begin{proof}
  This is clear as any   prefix tree for $n$ inputs  has exactly $n-1$ prefix gates. 
\end{proof}
To analyze the structure of fast prefix carry bit circuits we
begin with a well-known definition: let $F_n$ be the $n$-th
\emph{Fibonacci number}, where $F_0 = 0, F_1 = 1$ and $F_n = F_{n-1} +
F_{n-2}$. The exact formula for computing the $n$-th Fibonacci number
is $F_n = \frac{1}{\sqrt{5}} (\varphi^n - \psi^n)$, where $\varphi =
\frac{1+ \sqrt{5}}{2}$ is the golden section and $\psi =
\frac{1-\sqrt{5}}{2}$.

We first prove a similar delay bound to \cite{bonn1}, but instead of
bounding the recursive function $\mathcal{D}$, we explicitly construct
our solution and obtain useful structural information about it.

\begin{lemma} \label{lem:fib-pfx} Let $t_1, \dots, t_n \in
  \mathbb{N}_0$ be a sequence of input arrival times for inputs $z_1,
  \dots, z_n$, and let $F_k$ be the first Fibonacci number that is
  at least as large as $\sum_{i=1}^{n} (F_{t_i + 3} - 1)$. Then there
  is a prefix tree computing $z_n \circ \dots \circ z_1$ with
  logic gate delay at most $k$.
\end{lemma}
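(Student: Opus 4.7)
The plan is to prove the lemma by strong induction on $n$.

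For the base case $n=1$ the prefix tree is the single leaf $z_1$, whose delay equals $t_1$; the elementary estimate $F_{t_1+2} \leq F_{t_1+3}-1 \leq F_k$ then yields $t_1 \leq k-2 \leq k$. For $n=2$ the tree is a single prefix gate of delay $\max\{t_2+1,\,t_1+2\}$, and the term-wise consequence $F_{t_i+3}-1 \leq F_k$ forces $t_i \leq k-3$, so the delay is at most $k-1$.

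For the inductive step with $n \geq 3$, the aim is to pick a split index $l \in \{1,\dots,n-1\}$ for which the partial sums $R_l = \sum_{i \leq l}(F_{t_i+3}-1)$ and $L_l = \sum_{i > l}(F_{t_i+3}-1)$ satisfy $R_l \leq F_{k-2}$ and $L_l \leq F_{k-1}$. Given such an $l$, the induction hypothesis provides prefix trees on $z_1,\dots,z_l$ and on $z_{l+1},\dots,z_n$ with delays at most $k-2$ and $k-1$, and combining them with a root prefix gate produces a tree with delay $\max\{D_L+1,\,D_R+2\} \leq k$; the identity $F_k = F_{k-1}+F_{k-2}$ ensures the feasible window for $R_l$ is nonempty in the first place.

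The main obstacle is to guarantee that such an $l$ always exists. Because the partial sums advance by jumps $F_{t_{l+1}+3}-1$, a single large term can overshoot the feasible window $[\max\{0,\,S-F_{k-1}\},\,F_{k-2}]$. I plan to first sharpen the per-term bound to $t_i \leq k-4$ whenever $n \geq 3$: if some $t_i = k-3$ then $F_{t_i+3}-1 = F_k - 1$, and the remaining $n-1 \geq 2$ strictly positive terms would push $S$ past $F_k$, contradicting the hypothesis. With this stronger bound I would set $l^\ast = \max\{l : R_l \leq F_{k-2}\}$ and split there whenever $L_{l^\ast} \leq F_{k-1}$; otherwise I would peel off a boundary piece of size one or two and dispatch it to the $n=1$ or $n=2$ base case. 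The sharpened bound $t_i \leq k-4$ is precisely what makes the resulting base-case delay fit inside the available $k-2$ or $k-1$ budget. The technically most delicate step will be to check, by a Fibonacci-identity case analysis, that after the peel the residual sequence still satisfies the inductive sum condition when $S$ lies just below $F_k$ and the feasible window is narrow.
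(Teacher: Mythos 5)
Your plan replaces the paper's global construction (assigning $F_{t_i+3}-1$ consecutive leaves of a Fibonacci tree to each input and proving, via the ``subtree sequence'' structure, that every input fully owns a subtree of size $F_{t_i+1}$) by a direct induction that splits the index sequence at a position where both partial sums fit the budgets $F_{k-2}$ and $F_{k-1}$. The step you yourself flag as ``the technically most delicate'' is exactly where the argument breaks, and the proposed repair (peeling a piece of size one or two off a \emph{boundary}) does not cover it, because the input that straddles the feasible window can sit in the middle of the sequence. Concrete instance: $k=8$ and arrival times $0,0,0,0,4,0,0,0,0,0$ (right to left), so the weights are $1,1,1,1,12,1,1,1,1,1$ with total $21=F_8$ and every $t_i\leq k-4$. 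The right-part sum must lie in $[21-F_7,\,F_6]=[8,8]$, but the partial sums are $1,2,3,4,16,17,\dots$, so no split index is feasible; peeling one or two inputs at either end leaves a residue of weight $19$ or $20$, exceeding both $F_7=13$ (left budget) and $F_6=8$ (right budget). The lemma is nevertheless true for this instance, and the paper's proof handles it precisely because the straddling input's leaves may be divided between the two subtrees, with the input finally hoisted to a fully owned subtree on one side; equivalently, the recursion of Lemma~\ref{lem:fast-bl} works only because the boundary input's leaf count is \emph{reduced} when it is assigned to a side. To make your induction go through you would have to strengthen the induction hypothesis in this spirit (e.g., a boundary input needs only $F_{t+1}$ or $F_{t+2}$ worth of weight rather than $F_{t+3}-1$); with the hypothesis stated purely in terms of $\sum_i(F_{t_i+3}-1)$, the straddling case is not just delicate but unprovable as set up.

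A secondary issue: your combination rule $\max\{D_L+1,\,D_R+2\}$ and the $n=2$ formula $\max\{t_2+1,\,t_1+2\}$ are the prefix-gate-level delays of Proposition~\ref{lem:pfx-to-logic}, not the logic-gate delays the lemma claims. Since $g_i$ and $p_i$ arrive simultaneously, the left input of a prefix gate also feeds the path through gates $B$ and $C$, so a leaf used as a \emph{left} child contributes $t_i+2$, not $t_i+1$ (for $n=2$ the true delay is $\max\{t_2+2,\,t_1+2\}$). The rule with $+1$ on the left is valid only once the invariant ``propagate delay $\leq$ generate delay $-1$'' holds, which fails at the inputs; this is exactly why the paper works with skewed arrival times and proves $k-1$ there before shifting by one. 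Your per-term bounds ($t_i\leq k-3$, resp.\ $k-4$) likely leave enough slack to absorb this at leaves, but the invariant (or the skewing device) must be stated and maintained explicitly, since otherwise the induction proves only the prefix-level bound $d\leq k$, which by Proposition~\ref{lem:pfx-to-logic} guarantees no better than $D\leq k+1$.
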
 

\begin{proof}
Throughout the proof, we implicitly assume that every propagate signal
actually arrives by at least one time unit earlier than the
corresponding generate signal, i.\ e.\ $g_i$ has arrival time $t_i$
and $p_i$ has arrival time at most $t_i - 1$, thus prefix gates have
depth two for the input with smaller indices and depth one for the
input with larger indices. Under this assumption, we prove that the
delay is at most $k-1$. Adding one to all arrival times and delays
yields a circuit with delay $k$ under the assumption that $g_i$ is
available at time $t_i + 1$ and $p_i$ at time $t_i$, which is true for
the given input arrival times. For signal pairs $(g_i,p_i)$ with
arrival times satisfying this assumption, we say that they have
\emph{skewed arrival times}.\index{arrival time!skewed}

  The proof has two main parts. In the first part, we construct a
  binary tree $T$ with $F_k$ leaves in such a way that if we consider
  its internal nodes as prefix gates and its leaves as inputs with
  arrival time 0, then its overall delay is $k-1$. During the second
  step, we replace sections of consecutive leaves and the
  corresponding subtrees of $T$ with our original inputs so that the
  arrival time of the input does not exceed the depth of the subtree.

  Let $T$ be a tree constructed by starting at the root $r$ and
  recursively constructing a binary tree with $F_{k-1}$ leaves on the
  left and one with $F_{k-2}$ leaves on the right as in
  Figure~\ref{fig:fib-tree}. We refer to $T$ as a \emph{Fibonacci
    tree} for $k$.

  \begin{figure}[!tb]%
    \centering{%
      \resizebox{0.55\linewidth}{!}{%
        \begin{tikzpicture}
          \input{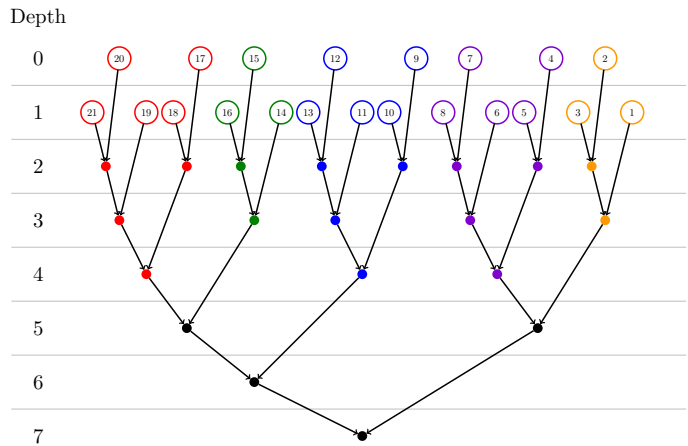}
        \end{tikzpicture}%
      }%
      \caption{Fibonacci tree $T$ for $k = 8$}%
      \label{fig:fib-tree}%
    }%
  \end{figure}%

  Replacing all non-leaf nodes with prefix gates and leaves with new
  inputs (with arrival time 0 and unrelated to the original inputs) as
  well as adding an output at the root yields a prefix tree for $F_k$
  inputs with logic gate depth $k-1$. This can be seen inductively; it
  is certainly true for $k = 2, 3$ and thus for $k > 3$, the left tree
  has depth $k-2$, the right tree has depth $k-3$, and the last prefix
  gate has delay $\max\{k-2+1, k-3+2\} = k-1$. The minimum depth of a
  prefix tree with $l$ leaves is at most $k-1$ if and only if $l \leq
  F_k$.

  Now we show how to replace parts of the tree by inputs with skewed
  arrival times $t_1, \dots, t_n$ without increasing the delay. We
  start by subdividing the leaves of the tree: from right to left, the
  first $F_{t_1+3}-1$ leaves are assigned to the first input, the next
  $F_{t_2+3}-1$ leaves are assigned to the second input, and input $i$
  gets leaves $1+\sum_{j=1}^{i-1} (F_{t_j+3} -1)$ up to $\sum_{j=1}^i
  (F_{t_j +3} -1)$. Our choice of $k$ ensures that every input $i$
  gets $F_{t_i +3}-1$ successive leaves assigned to it; leftover
  leaves can be deleted without increasing the delay. The ordering of
  the inputs is preserved within the tree.

  We define a subtree of size $l$ to be a tree obtained by taking a
  vertex $v$ and all its successors with $l$ leaves in total. By
  construction, every subtree of size $l$ must be a Fibonacci tree for
  some $j$ with $F_j = l$. Furthermore, for every $F_j$ with $j \leq
  k-1$, we can find subtrees of $T$ of size $F_j$. A vertex $v$ in $T$
  is the root of a subtree of size $F_j \neq 1$ if and only if $v$ has
  depth $j-1$. For $F_j = 1$, we know that $j \in \sset{0,1}$ and $v$
  has depth $0$.

  Our goal is to show that every input $i$ with arrival time $t_i$
  owns all the leaves of a subtree of size $F_{t_i+1}$. In order to
  see this, we remove all edges connecting a vertex with depth at most
  $t_i$ to a vertex with depth more than $t_i$ from the tree. This
  separates the tree into a connected component containing the root
  and several subtrees of size at most $F_{t_i+1}$. For example, if
  $t_i = 4$, then Figure~\ref{fig:fib-tree} would contain the
  component containing the root as well as subtrees indicated by the
  coloring of size $3,5,5,3,5$ in that order. In general, since every
  gate has depth 1 or 2, each root of such a tree has depth $t_i$ or
  $t_{i-1}$, therefore the subtrees can only have size $F_{t_i + 1}$
  or $F_{t_i}$. Our next goal is to prove that this ordered
  \emph{subtree sequence} has a special structure. Since only the
  roots of ``big'' subtrees of size $F_{t_i+1}$ can be replaced by
  input $i$ without increasing the delay, we show that there are few
  small subtrees of size $F_{t_i}$.

  Due to the fact that the depth difference between a node and its
  left child is always one, the leftmost root in the subtree sequence
  of a Fibonacci tree for some $k \geq t_i$ has depth $t_i$ and its
  parent has depth $t_{i+1}$. Therefore, the subtree rooted here has
  size $F_{t_i+1}$. We will now show that in a Fibonacci tree, the
  ordered subtree sequence of the trees of size $F_{t_i+1}$ and size
  $F_{t_i}$ never contains two consecutive subtrees of size
  $F_{t_i}$. For $k = t_i + 1$, this is clear. For $k = t_i + 2$,
  there are only two subtrees, and the left one has size
  $F_{t_i+1}$. For $k>t_i + 2$, the subtree sequence of a Fibonacci
  tree for $k$ corresponds to the concatenation of the subtree
  sequences corresponding to a tree for $k-1$ and a tree for $k-2$. As
  those satisfy the claim by induction hypothesis and each sequence
  starts with a tree of size $F_{t_i + 1}$, the Fibonacci tree for $k$
  has the stated property as well.

  We know that input $i$ owns $F_{t_i+3}-1$ consecutive leaves. In the
  subtree sequence, at most the first $F_{t_i+1}-1$ leaves belonging to
  input $i$ are part of subtrees of which $i$ does not own the first
  (rightmost) leaf. Of the remaining leaves, the first $F_{t_i}$
  might cover a subtree of that size. This accounts for $F_{t_i+1}
  + F_{t_i} -1 = F_{t_i+2}-1$ leaves. The next $F_{t_i+1}$ leaves are
  owned by $i$ as well, so at that point, at the latest, there must be
  a subtree of size $F_{t_i+1}$ of which $i$ owns all leaves. For $t_i
  \neq 1$, we can replace the root of this subtree with one input with
  arrival time $t_i$. By construction, this does not increase the
  delay.

  Here we used that $F_{t_i+1} > F_{t_i}$ to give a lower bound of the
  depth of the owned subtree. The only exception from this is the case
  $t_i = 1$, which can be treated analogously: every input with
  $t_i=1$ owns two leaves, and by similar arguments as for the subtree
  sequence, one of them must be at depth 1 in the Fibonacci tree. 

  After removing all leaves that have not been replaced by any
  original input, we obtain a prefix tree computing $z_n \circ \dots
  \circ z_1$ with delay $k-1$. All of these arguments used the
  assumption of skewed arrival times also for the inputs, which can be achieved in such a
  way that the actual delay of the circuit increases to at most $k$.
\end{proof}

\begin{figure}[!t]
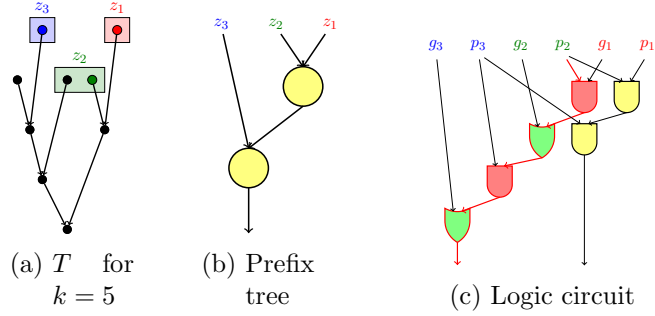

  \centering
  \begin{subfigure}[b]{.1\linewidth}
    
  \centering{
    \resizebox{1\linewidth}{!}{
      \begin{tikzpicture}
        \input{fib-t}
      \end{tikzpicture}
    }
    \caption{$T$ for $k = 5$}
    \label{fig:fib-t}%
  }

  \end{subfigure}
  \qquad
  \begin{subfigure}[b]{.12\linewidth}
    
  \centering{
    \resizebox{1\linewidth}{!}{
      \begin{tikzpicture}
        \input{fib-pfx}
      \end{tikzpicture}
    }
    \caption{Prefix tree}
    \label{fig:fib-pfx}%
  }

  \end{subfigure}
  \qquad
  \begin{subfigure}[b]{.2\linewidth}
    
  \centering{
    \resizebox{1\linewidth}{!}{
      \begin{tikzpicture}
        \input{fib-logic}
      \end{tikzpicture}
    }
    \caption{Logic circuit}
    \label{fig:fib-logic}%
  }

  \end{subfigure}
  \caption{A tight example}
  \label{fig:fib-tight} 
\end{figure}
The upper bound of $k$ is tight for the final logic circuit as
evident from the example $0, 1, 0$, where $\sum_{i=1}^3 (F_{t_i+3} -
1) = 1 + 2 + 1 = 4$, so $k = F_k = 5$ (see
Figure~\ref{fig:fib-tight}).

For this arrival time profile, the algorithm will (implicitly)
construct the Fibonacci tree $T$ and assign leaves to the inputs as in
Figure~\ref{fig:fib-t}, where the colored vertices represent the
positions at which the inputs will actually be inserted into the
tree. These do not have to be leaves in general. After deleting
redundant inputs, we obtain a prefix tree (Figure~\ref{fig:fib-pfx})
and a corresponding logic circuit (Figure~\ref{fig:fib-logic}). Note
that $p_2$ has arrival time $1$ and the red path contains four gates,
hence the logic circuit has delay $5$.

From the proof of Lemma~\ref{lem:fib-pfx}, it is easy to see how to
avoid the enumeration of all potential splitting positions
$l=1,\dots,n-1$ in (\ref{eqn:bl-dp}). Since there are $F_{k-1}$ leaves
in the left subtree of $T$ and $F_{k-2}$ in the right subtree, let $$j
= \min \left\{1 \leq j \leq n : \sum_{i = 1}^{j} (F_{t_i + 3} - 1)
  \geq F_{k-2}\right\}$$ and $f = F_{k-2} - \sum_{i = 1}^{j-1} (F_{t_i
  + 3} - 1)$, then $f$ counts how many leaves belonging to input $j$
are part of the right subtree, and $j$ is the only input that might
have leaves in both subtrees. Since in our decomposition the leftmost
$F_{t_j+1}$ leaves of the right subtree belong to a Fibonacci tree of
size $F_{t_j+1}$, $j$ should be on the right side of the decomposition
if and only if $f \geq F_{t_j+1}$. Otherwise, there are at least
$F_{t_j+2}$ leaves on the left side, hence in our sequence of subtrees
$j$ might own all leaves of a subtree of size $F_{t_j}$, but the
remaining leaves must belong to and cover a subtree of size
$F_{t_j+1}$, hence $j$ should be on the left side. Note that it is
never optimal to assign all leaves to the same side, thus this
partition can always be assumed as proper without increasing the
delay. After updating the number of leaves belonging to $j$ on the
side it is assigned to, this yields a recursive procedure that
terminates when there is only one index left for a subtree.

\begin{lemma}\label{lem:fast-bl}
  Given input arrival times $t_1, \dots, t_n \in \mathbb{N}_0$, let
  $F_{k}$ be the first Fibonacci number that is at least as large as
  $\sum_{i=1}^{n} (F_{t_i + 3} - 1)$. A prefix tree for these input
  arrival times with delay at most $k$ can be found with running time
  $\mathcal{O}\left(n \log n + k + \max t_i\right)$ under the
  assumption that we can perform additions and multiplications by a
  constant on numbers of arbitrary size in constant time (an
  assumption we will show how to avoid later). If $t_i \in
  \mathcal{O}(n)$ for all $i$, then the running time is
  $\mathcal{O}(n\log n)$.
\end{lemma}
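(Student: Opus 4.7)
The existence of a delay-$k$ prefix tree is already guaranteed by Lemma~\ref{lem:fib-pfx}, so the work is to implement the recursive decomposition sketched just before the lemma within the stated time bound. I would precompute all Fibonacci numbers and prefix sums once, then realise the recursion with a single binary search per recursive call, passing $O(1)$ endpoint corrections instead of ever rewriting the precomputed tables.

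Step~1 (precomputation). Build the table $F_0, F_1, \dots, F_K$ with $K = \max(k, \max_i t_i + 3)$ in $O(K)$ time via the Fibonacci recurrence. Compute $a_i := F_{t_i+3} - 1$ for each input in $O(n)$ time, and the prefix sums $S_i := \sum_{j \le i} a_j$. Determine $k$ by scanning the Fibonacci table for the smallest $F_k \ge S_n$, which costs $O(k)$.

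Step~2 (recursion). Each call receives a contiguous input range $[l,r]$, a target Fibonacci size $F_m$, and an $O(1)$-size correction recording that the leaf count of $l$ or $r$ may have been reduced by the parent. If $l = r$, emit a leaf. Otherwise, binary-search in $S$ for the least $j \in [l,r]$ with $(S_j - S_{l-1}) + (\text{correction})$ at least $F_{m-2}$, compute $f = F_{m-2} - \bigl((S_{j-1} - S_{l-1}) + (\text{correction})\bigr)$, and send $j$ entirely into the right subcall if $f \ge F_{t_j+1}$ and entirely into the left subcall otherwise, with its residual leaf count packaged as the $O(1)$ correction for the opposite subcall. By Lemma~\ref{lem:fib-pfx} and the structural discussion preceding the lemma, this produces a prefix tree of delay at most $k$; correctness follows by induction on $m$.

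Step~3 (running time). The recursion tree is binary and terminates at single-input leaves, so it has at most $2n-1$ nodes. Each node performs one binary search and $O(1)$ arithmetic, costing $O(\log n)$, giving $O(n \log n)$ for construction; together with the precomputation we obtain $O(n \log n + k + \max_i t_i)$. When $t_i \in O(n)$, the table $F_{t_i+3}$ is bounded by $\varphi^{O(n)}$, hence $\sum_i a_i \le n \cdot \varphi^{O(n)}$ and $k \in O(n)$, so the overall bound collapses to $O(n \log n)$.

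The main subtlety will be the endpoint bookkeeping in Step~2: because a boundary input's leaf count changes between a call and the subcalls that actually use it, the precomputed prefix sums can no longer be read verbatim in the child. Keeping this correction as an additive $O(1)$ parameter rather than editing $S$ is what preserves the $O(\log n)$-per-call cost; verifying that at most one such correction is ever needed per subcall (because each split reassigns the leaves of only the single pivot $j$) is where the argument has to be made carefully.
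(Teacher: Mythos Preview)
Your proposal is correct and follows essentially the same approach as the paper: precompute the Fibonacci table and the cumulative leaf counts, then realise the recursive split with one binary search per internal node of the prefix tree. One small correction: in Step~2 a subcall can carry corrections at \emph{both} endpoints (the inherited endpoint may already be corrected by an ancestor, and the freshly created endpoint may be corrected by the current split), so the bound is ``at most two'' rather than ``at most one''; this is still $O(1)$ and does not affect the running-time analysis.
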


\begin{proof} %
  We have already argued that the algorithm achieves the stated delay
  bound. We show that this partitioning strategy will ensure that
  every input $i$ is substituted for a subtree of size at least
  $F_{t_i+1}$.

  If there is only one index $i$ remaining, it was either the
  rightmost (lowest) or leftmost (highest) index in the previous
  step. If it was the rightmost index, then the subtree previously
  contained $F_{t_i+2}$ of its leaves as well as at least one more
  leaf, hence $k \geq t_i + 3$ and the right subtree has size at least
  $F_{t_i+1}$, so replacing this subtree by input $i$ leads to the
  claimed delay by the argument used in Lemma~\ref{lem:fib-pfx}. If it
  was the leftmost index, a similar argument applies.

  For the running time estimate, we compute the indices assigned to
  every leaf and the delay bound $k$ in time $\mathcal{O}\left(
    n + k + \max t_i \right)$. There are $n-1$ recursive
  partitioning steps, during each of which we find the input $j$ as
  the input index to own leaves in the left subtree. This can be done
  in logarithmic time using binary search in the sorted array of the
  indices of the first leaf belonging to every input.
\end{proof}
\begin{figure}[htb]
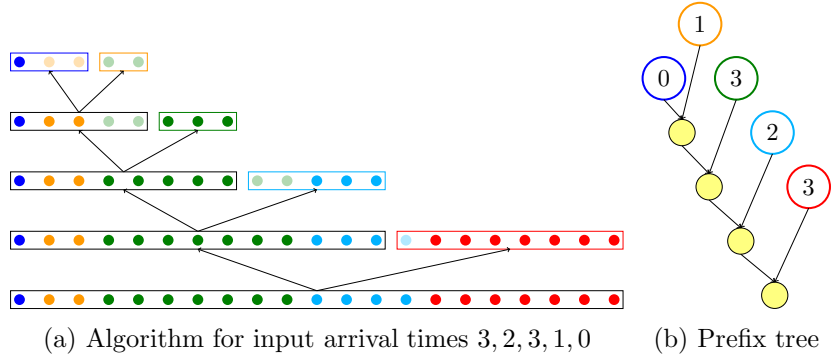

  \centering
  \begin{subfigure}[b]{0.5\linewidth}
    
  \centering{
    \resizebox{1\linewidth}{!}{
      \begin{tikzpicture}
        \input{core-ex}
      \end{tikzpicture}
    }
    \caption{Algorithm for input arrival times $3, 2, 3, 1, 0$}
    \label{fig:core-ex}%
  }

  \end{subfigure}
  \begin{subfigure}[b]{0.16\linewidth}
    
  \centering{
    \resizebox{1\linewidth}{!}{
      \begin{tikzpicture}
        \input{core-tree}
      \end{tikzpicture}
    }
    \caption{Prefix tree}
    \label{fig:core-tree}%
  }

  \end{subfigure}
  \caption{Example of the algorithm}
  \label{fig:alg-ex}
\end{figure}

Figure~\ref{fig:core-ex} shows how the algorithm works for the
sequence of input arrival times $3, 2, 3, 1, 0$. The number of leaves
we need is $7 + 4 + 7 + 2 + 1 = 21$, therefore $k=8$ suffices. We
number the leaves from right to left as $1, \dots, 21$. After the
first split, $3$ light blue leaves are in the left subtree, hence the
corresponding input is assigned to the left subtree.  Note that for
the orange leaves, we end up assigning them to a subtree that does not
contain any orange leaves in the beginning in order to ensure a proper
partition. We obtain the result shown in Figure~\ref{fig:core-tree}.

\begin{lemma} \label{lem:reduction}
  We can construct a prefix tree with the delay bound of
  Lemma~\ref{lem:fib-pfx} for any instance $(t_1, \dots, t_n)$ by
  constructing a prefix tree for an instance $(t'_1, \dots, t'_n)$
  with $\max t'_i \leq 2n - 1 $.
\end{lemma}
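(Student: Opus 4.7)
The plan is to use a uniform shift of arrival times. Let $M := \max_i t_i$. If $M \leq 2n-1$, set $t'_i := t_i$ and there is nothing more to prove. Otherwise, define $c := M - (2n-1) \geq 1$ and $t'_i := \max(t_i - c,\, 0)$, so that $\max_i t'_i = 2n-1$, as required.

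Let $k'$ denote the Lemma~\ref{lem:fib-pfx} delay bound for $(t'_1, \dots, t'_n)$, i.e., the smallest integer with $F_{k'} \geq \sum_i (F_{t'_i + 3} - 1)$. Construct a prefix tree $T'$ for $(t'_i)$ achieving delay at most $k'$ via Lemma~\ref{lem:fib-pfx}. I claim the same tree $T'$, evaluated on the inputs $z_1, \dots, z_n$ with their actual arrival times $(t_i)$, still computes $z_n \circ \dots \circ z_1$ and has delay at most $k' + c$: indeed, $t_i \leq t'_i + c$ for every $i$, so shifting the whole delay analysis of $T'$ uniformly by $c$ preserves the bound.

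The proof reduces to the Fibonacci inequality $k' + c \leq k$ (where $k$ is the Lemma~\ref{lem:fib-pfx} bound for $(t_i)$), equivalently $F_{k-c} \geq \sum_i (F_{t'_i + 3} - 1)$. This is the main obstacle. My plan is to use the addition identity $F_{a+c} = F_{c+1} F_a + F_c F_{a-1}$ both on $F_k$ (giving $F_k = F_{c+1} F_{k-c} + F_c F_{k-c-1}$) and on each $F_{t_i+3}$ with $t_i \geq c$ (giving $F_{t_i+3} = F_{c+1} F_{t'_i+3} + F_c F_{t'_i+2}$). Substituting these identities into $F_k \geq \sum(F_{t_i+3}-1)$ and isolating $F_{k-c}$ reduces the target to showing that the contribution of at least one input $i^*$ with $t_{i^*} = M$, which supplies excess $F_c F_{2n+1} + F_{c+1} - 1$ over what is needed for its $(t'_i)$-counterpart, dominates the deficit $\leq (n-1)(F_{c+1}-1)$ incurred by all inputs with $t_i < c$. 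Since $F_{2n+1}$ grows exponentially in $n$ whereas the number of ``small'' inputs is at most $n-1$, this excess-vs-deficit inequality holds comfortably, yielding the desired bound $F_{k-c} \geq \sum_i (F_{t'_i+3}-1)$ and completing the reduction.
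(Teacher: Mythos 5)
Your setup is fine as far as it goes: replacing $(t_i)$ by $t_i'=\max(t_i-c,0)$ with $c=M-(2n-1)$, building $T'$ for $(t_i')$, and observing that $T'$ run on the true arrival times has delay at most $k'+c$ are all correct, and you correctly identify that everything hinges on the Fibonacci inequality $k'+c\le k$, i.e.\ $F_{k-c}\ge\sum_i\bigl(F_{t_i'+3}-1\bigr)$. The gap is in the proposed proof of that inequality. If you substitute $F_k=F_{c+1}F_{k-c}+F_cF_{k-c-1}$ and $F_{t_i+3}=F_{c+1}F_{t_i'+3}+F_cF_{t_i'+2}$ into $F_k\ge\sum_i(F_{t_i+3}-1)$ and isolate $F_{k-c}$, the cross term $F_cF_{k-c-1}$ does \emph{not} disappear: what you actually need is
\begin{equation*}
\sum_{i:\,t_i\ge c}\bigl(F_cF_{t_i'+2}+F_{c+1}-1\bigr)\;\ge\;F_cF_{k-c-1}\;+\;\sum_{i:\,t_i<c}\bigl(F_{c+1}-F_{t_i+3}+1\bigr),
\end{equation*}
not merely ``excess of one maximal input $\ge$ deficit of the clamped inputs.'' The term $F_cF_{k-c-1}$ is of the same order as the excesses themselves, so the single excess $F_cF_{2n+1}+F_{c+1}-1$ of an input with $t_i=M$ does not dominate, and the ``$F_{2n+1}$ is exponential while there are at most $n-1$ small inputs'' comparison is not the relevant one. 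Concretely, for $n=2$, $t=(M,0)$ one has $k=M+3$ and the required inequality is $5F_c+F_{c+1}-1\ge 5F_c+(F_{c+1}-1)$, an exact equality in which the $5F_c=F_cF_{k-c-1}$ on the right is precisely the term your accounting drops; for $n=4$, $t=(M,M-1,0,0)$ one gets $S'=F_{k-c}=89$ exactly, and the identity needs the excesses of \emph{both} large inputs ($34F_c$ and $21F_c$) to match $F_cF_{k-c-1}=55F_c$ plus the two deficits, while the maximal input alone falls far short. So the inequality you need appears to be true but is tight, with equality cases; no argument resting on a comfortable margin can establish it, and as written your reduction omits a leading-order term.

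For comparison, the paper closes the lemma by a structural rather than numerical argument: every input-to-output path in a prefix tree contains at most $n-1$ prefix gates and hence contributes depth at most $2n-2$, so an input arriving more than $2n-2$ time units before the latest input can never be critical; this is what licenses clamping and shifting the arrival times. If you want to keep your purely Fibonacci-arithmetic route, you must carry out the full accounting above (all large inputs' excesses against $F_cF_{k-c-1}$ plus the deficits, using the exact identity rather than a domination heuristic), or else switch to a path-based argument of the paper's kind.
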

This follows from the fact that the longest path from any input to the
output contains at most $n-1$ prefix gates. The maximum delay
difference can be assumed as $2n-2$, since any input with earlier
arrival time will never be critical.

\begin{theorem} \label{thm:my-pfx} For $n$ inputs with arrival times
    $t_1, \dots, t_n \in \mathbb{N}_0$, the algorithm finds a prefix
    carry bit circuit for $c_{n+1}$ with $$\delay(c_{n+1}) \leq k \leq
    \floor{\loq \left(\sum_{i=1}^{n} \varphi^{t_i}\right)} + 4.$$ 
    The constructed logic circuit has size at most $3n - 3$ and
    maximum fan-out two at all logic gates and inputs. Furthermore,
    the delay is at most
    $$k \leq \loq \left(\sum_{i=1}^n 2^{t_i}\right) + 2.673 \leq 1.441 \ld
    \left(\sum_{i=1}^n 2^{t_i}\right) + 2.673.$$
  \end{theorem}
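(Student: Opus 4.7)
The plan is to combine Lemmas~\ref{lem:fib-pfx}, \ref{lem:fast-bl}, and \ref{lem:small-bl} with two separate estimates of the Fibonacci index $k$. Lemma~\ref{lem:fast-bl} provides the algorithm and Lemma~\ref{lem:fib-pfx} certifies $\delay(c_{n+1}) \leq k$, where $F_k$ is the smallest Fibonacci number with $F_k \geq \sigma := \sum_{i=1}^{n}(F_{t_i+3}-1)$. Lemma~\ref{lem:small-bl} then yields size $\leq 3n-3$ and fan-out $\leq 2$ for the resulting logic circuit, since any prefix tree on $n$ inputs contains exactly $n-1$ prefix gates. What remains are the two quantitative inequalities for $k$.

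I will derive both from Binet's formula $F_m=(\varphi^m-\psi^m)/\sqrt{5}$ with $|\psi|=1/\varphi<1$, which gives the two-sided estimate $(\varphi^m-1)/\sqrt{5}\leq F_m\leq(\varphi^m+1)/\sqrt{5}$. For the first inequality, let $k':=\floor{\loq\sum_i\varphi^{t_i}}+4$, so that $\varphi^{k'}\geq\varphi^3\sum_i\varphi^{t_i}$. Applying the lower Binet bound to $F_{k'}$ and the upper bound termwise to $\sigma$ cancels the leading $\varphi^3\sum_i\varphi^{t_i}/\sqrt{5}$ term and leaves $F_{k'}-\sigma\geq n-(n+1)/\sqrt{5}$, which is strictly positive for $n\geq 1$. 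Hence $F_{k'}\geq\sigma$ and $k\leq k'$.

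For the second inequality the key trick is integrality. From $F_k\geq(\varphi^k-1)/\sqrt{5}$, the hypothesis $\varphi^k\geq\sqrt{5}\sigma$ gives $F_k\geq\sigma-1/\sqrt{5}$, and since $F_k,\sigma\in\mathbb{Z}$ with $1/\sqrt{5}<1$, this already forces $F_k\geq\sigma$. Therefore $k\leq\ceil{\loq(\sqrt{5}\sigma)}\leq 1+\loq\sqrt{5}+\loq\sigma$, and the numerical identity $1+\loq\sqrt{5}=1+(\loq 5)/2<2.673$ delivers the stated additive constant, provided $\sigma\leq\sum_i 2^{t_i}$. That bound reduces termwise to $F_{t_i+3}-1\leq 2^{t_i}$, which a routine induction on $t_i$ confirms: the cases $t_i\in\{0,1,2\}$ are equalities, and the inductive step uses the Fibonacci recurrence together with $1\leq 2^{t_i-2}$ for $t_i\geq 2$. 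The final conversion to base two uses $\loq x=\ld x/\ld\varphi$ with $1/\ld\varphi<1.441$.

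The main obstacle is tracking constants. The first estimate only requires choosing the correct direction of the Fibonacci inequalities at each step, but the second is tight: without the integrality step, the natural real-valued sufficient condition becomes $\varphi^k\geq\sqrt{5}\sigma+1$, which inflates the additive constant past $3$. Using the weaker condition $\varphi^k\geq\sqrt{5}\sigma$ via the observation that both $F_k$ and $\sigma$ are integers is precisely what allows the clean value $1+\loq\sqrt{5}<2.673$.
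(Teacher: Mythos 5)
Your proposal is correct and follows essentially the same route as the paper: delay, size and fan-out come from Lemmas~\ref{lem:fib-pfx}, \ref{lem:fast-bl} and \ref{lem:small-bl}, and both bounds on $k$ are obtained from Binet's formula together with the termwise estimates $F_{t_i+3}-1\leq\frac{1}{\sqrt{5}}(\varphi^{t_i+3}+1)-1$ and $F_{t_i+3}-1\leq 2^{t_i}$, with integrality supplying the clean constants. The only (cosmetic) difference is that you verify the candidate index directly by showing $F_{k'}\geq\sigma$ for $k'=\floor{\loq\sum_i\varphi^{t_i}}+4$ and $F_m\geq\sigma$ for $m=\ceil{\loq(\sqrt{5}\sigma)}$, whereas the paper bounds $\varphi^{k-1}\leq\sqrt{5}(F_{k-1}+1)\leq\sqrt{5}\sigma$ from the minimality of $k$; the two arguments are interchangeable.
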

  \begin{proof}
  The size and fan-out bounds follow from Lemma~\ref{lem:small-bl}.
  The delay of the
  constructed circuit is $k$. By choice of $k$, we know that
  $\sum_{i=1}^n (F_{t_i+3} -1) \geq F_{k-1} + 1$. With $\varphi = \frac{1+
    \sqrt{5}}{2}$, $\psi = \frac{1-\sqrt{5}}{2}$ and the exact formula
  $F_n = \frac{1}{\sqrt{5}} \cdot (\varphi^n - \psi^n)$, it follows that
  $|\sqrt{5}F_n - \varphi^n| \leq 1$ and for $n\geq 1$,
  $|\sqrt{5}F_n - \varphi^n| \leq |\psi|$.

  Now $k -1 = 0$ can only be true if there is only one input. In this
  case, the stated delay bound is trivially true. Otherwise, we obtain
  the estimate:
  \begin{align*}
    &k-1 = \loq \left(\varphi^{k-1}\right) \leq \loq \left(\sqrt{5}
      \left(F_{k-1} + 1\right)\right) \leq \loq \left(\sqrt{5}
      \left(\sum_{i=1}^{n} \left(F_{t_i + 3}  - 1\right)\right)\right) \\
    &\leq \loq \left(\sqrt{5} \left(\sum_{i=1}^{n}
        \left(\frac{1}{\sqrt{5}} \left(\varphi^{t_i + 3} + 1\right) - 1\right)
      \right)\right) \leq \loq \left(\sum_{i=1}^{n} \varphi^{t_i + 3}
    \right) = \loq \left(\sum_{i=1}^{n} \varphi^{t_i} \right) + 3,
  \end{align*}
  which proves the first claim. 

  For a single input, the second delay bound is trivially
  true.  Furthermore, for $t_i \geq 0$, $F_{t_i + 3} - 1\leq
  2^{t_i}$. We obtain the estimate:
  \begin{align*}
    &k-1 = \loq \left(\varphi^{k-1}\right) \leq \loq \left(\sqrt{5}
      \left(F_{k-1} + 1\right)\right) \leq \loq \left(\sqrt{5}
      \left(\sum_{i=1}^{n} \left(F_{t_i + 3}  - 1\right)\right)\right) \\
    &  \leq \loq \left( \sqrt{5} \left (\sum_{i=1}^n 2^{t_i}\right)\right)
    = \loq \left(\sum_{i=1}^n 2^{t_i}\right) + \loq \sqrt{5} 
    \leq 1.441\ld \left(\sum_{i=1}^n 2^{t_i}\right) + 1.673. 
  \end{align*}
\end{proof}

Our proof allows an improvement over the delay bound of \cite{bonn1}
due to a refined analysis.  A running time of $\mathcal{O}(n\log n)$
follows from Lemma~\ref{lem:fast-bl} assuming that we can add numbers
of linear size and multiply them by a constant in constant time.
Under the more practical assumption that these operations take linear
time with respect to the number of digits, the algorithm has
super-quadratic running time, which can be avoided as follows:

\begin{theorem} \label{thm:fast-pfx} For any fixed $\gamma > 1$, a
  prefix carry bit circuit as in the setting of
  Theorem~\ref{thm:my-pfx} with $$\delay(c_{n+1}) \leq \loq
  \left(\sum_{i=1}^{n} \varphi^{t_i}\right) + 4 + 2.1 \cdot
  n^{1-\gamma }$$ can be found in $\mathcal{O}(n\gamma\log^2 n)$ time
  assuming linear-time addition and multiplication with constants.  It
  satisfies
  $$\delay(c_{n+1}) \leq 1.441 \ld
  \left(\sum_{i=1}^n 2^{t_i}\right) + 2.673 + 2.1 \cdot
  n^{1-1.4\gamma}.$$
\end{theorem}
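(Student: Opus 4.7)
The plan is to speed up the algorithm of Lemma~\ref{lem:fast-bl} by replacing its exact arithmetic on $\Theta(n)$-bit Fibonacci sums with truncated arithmetic at precision $B := \lceil 1.4\gamma \log_2 n \rceil$.

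First I apply Lemma~\ref{lem:reduction} to reduce to the case $\max_i t_i \leq 2n-1$, so that every Fibonacci value $F_{t_i+3}$ and every partial sum $\sum_{i\in I}(F_{t_i+3}-1)$ has binary length $\mathcal{O}(n)$. I represent each such quantity in a normalised floating-point format consisting of a $B$-bit mantissa together with an exponent of $\mathcal{O}(\log n)$ bits. Under the linear-time multiprecision model, one addition, comparison, or multiplication by a small constant on this representation costs $\mathcal{O}(B)=\mathcal{O}(\gamma \log n)$.

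Next I build a balanced segment tree on the index set $\{1,\dots,n\}$ whose internal nodes store the truncated subtree sums of $F_{t_i+3}-1$; preprocessing takes $\mathcal{O}(nB)=\mathcal{O}(n\gamma\log n)$ time. Each of the $n-1$ recursive splitting steps of Lemma~\ref{lem:fast-bl} is simulated by an $\mathcal{O}(\log n)$-step descent of the tree to locate the split index $j$, together with $\mathcal{O}(\log n)$ updates along the path to its ancestors to reflect the removal of the processed leaves. At $\mathcal{O}(\gamma \log n)$ per truncated operation, the total running time comes to $\mathcal{O}(n\gamma\log^2 n)$ as claimed.

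The main obstacle, and what I expect to be the bulk of the work, is the error analysis. Each truncated comparison is decided correctly up to a relative error of $2^{-B}=\mathcal{O}(n^{-1.4\gamma})$, so the approximate algorithm produces the exact output of Lemma~\ref{lem:fast-bl} applied to a perturbed instance $(t'_1,\dots,t'_n)$ whose Fibonacci total satisfies $\sum_i\varphi^{t'_i}\leq(1+2^{-B})^{\mathcal{O}(n)}\sum_i\varphi^{t_i}\leq \bigl(1+\mathcal{O}(n^{1-1.4\gamma})\bigr)\sum_i\varphi^{t_i}$. Taking $\log_\varphi$ turns this multiplicative blow-up into an additive delay slack bounded by $\mathcal{O}(n^{1-1.4\gamma})/\ln\varphi \leq 2.1\, n^{1-\gamma}$ in the $\log_\varphi$-form, and by $2.1\, n^{1-1.4\gamma}$ in the $\log_2$-form after scaling by $\log_\varphi 2 = 1.441$. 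Plugging the resulting value of $k$ into Theorem~\ref{thm:my-pfx} then delivers both claimed bounds. The delicate point to verify is that an individual comparison error of at most $2^{-B}$ only ever misplaces the chosen split by a single input position, so that these errors compose at most additively along the $\mathcal{O}(n)$ recursive levels rather than amplifying; this should follow from the fact that consecutive feasible Fibonacci budgets $F_j$ differ by at least a factor close to $\varphi$, which dominates the truncation error for the chosen $B$.
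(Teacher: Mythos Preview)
Your approach is genuinely different from the paper's, and considerably more complicated. The paper does not use truncated floating-point arithmetic at all. Instead it observes that the only obstruction to an $\mathcal{O}(n\gamma\log^2 n)$ running time is a large spread in the arrival times, and simply removes that spread: set $t_i' = \max\{t_i,\ \max_j t_j - \gamma\lceil\log_\varphi n\rceil\}$, run the exact algorithm of Lemma~\ref{lem:fast-bl} on $(t_1',\dots,t_n')$, and use the resulting tree for the original instance. Since every $t_i' \geq t_i$, the delay for the original inputs can only be smaller, so the delay bound of Theorem~\ref{thm:my-pfx} for the clipped instance is still valid. The additive slack then falls out of a single estimate $\log_\varphi\bigl(\sum_i \varphi^{t_i'}\bigr) \leq \log_\varphi\bigl(\sum_i \varphi^{t_i}\bigr) + \log_\varphi\bigl(1 + n^{1-\gamma}\bigr)$, and $\log_\varphi e < 2.1$ gives the constant exactly. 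No error propagation analysis is needed; all arithmetic is exact on $\mathcal{O}(\gamma\log n)$-bit integers.

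Your proposal, by contrast, leaves its central step unresolved. You assert that the truncated algorithm behaves like the exact algorithm on a perturbed instance $(t_1',\dots,t_n')$ with $\sum_i \varphi^{t_i'} \leq (1+2^{-B})^{\mathcal{O}(n)}\sum_i \varphi^{t_i}$, but this is not a statement that follows automatically from bounded relative error in individual comparisons: a wrong split at one level changes the subproblems at all subsequent levels, and there is no a priori reason the resulting tree is optimal for \emph{any} nearby arrival-time vector. Your closing paragraph acknowledges exactly this (``the delicate point to verify\dots''), and the suggested fix---that adjacent Fibonacci budgets differ by a factor near $\varphi$---does not by itself control how a one-position shift in the split index affects the recursive delay. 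Even if this can be repaired, the hidden constant $c$ in your $(1+2^{-B})^{cn}$ would have to be pinned down to recover the stated $2.1$. The paper's clipping trick sidesteps all of this.
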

\begin{proof}
  By Lemma~\ref{lem:reduction} and Theorem~\ref{thm:my-pfx}, we can
  solve instances with $\max t_i - \min t_i \leq \gamma \ceil{\loq n}$
  in $\mathcal{O}(n \gamma\log^2 n)$ time with linear-time addition.

  Given an instance $t_1, \dots, t_n$, we set $t_i' = \max\sset{t_i,
    \max_{j \in \sset{1, \dots, n}}
    t_j - \gamma\ceil{\loq n}}$ and compute a circuit for the modified
  instance in $\mathcal{O}(n \gamma\log^2 n)$. When reverting to the original
  arrival times, the delay of this solution does not increase, because
  none of the arrival times do. Therefore, 
  \begin{align*}
    \delay(c_{n+1}) - 4&\leq \loq \left(\sum_{i=1}^{n} \phi^{t_i'}\right) 
\\    &\leq \loq \left(n \cdot \phi^{\max t_i - \gamma\ceil{\loq n}}
      + \sum_{i=1}^{n} \phi^{t_i}\right) \\
    &\leq \loq \left(\phi^{\max t_i + (1-\gamma) \loq n}
      + \sum_{i=1}^{n} \phi^{t_i}\right) \\
    &\leq \loq \left(\sum_{i=1}^{n} \phi^{t_i}\right) + \loq \left(1 +
    \phi^{(1-\gamma) \loq n}\right) \\
    &\leq \loq \left(\sum_{i=1}^{n} \phi^{t_i}\right) + 
    \loq e \cdot n^{1-\gamma }, 
  \end{align*}
  and $\loq e < 2.1$. For the dual logarithm-based delay bound, we
  have
  \begin{align*}
    \delay(c_{n+1}) - 2.673 &\leq \loq 2 \cdot\ld \left(\sum_{i=1}^{n} 2^{t_i'}\right)
    \\ &\leq \loq 2 \cdot\ld \left(n \cdot 2^{\max t_i - \gamma\ceil{\loq
          n}}
      + \sum_{i=1}^{n} 2^{t_i}\right) \\
    &\leq\loq 2 \cdot \ld \left(2^{\max t_i - (1-\gamma\loq 2) \ld n}
      + \sum_{i=1}^{n} 2^{t_i}\right) \\
    &\leq \loq 2  \left(\ld \left(\sum_{i=1}^{n} 2^{t_i}\right) + \ld
    \left(1 +
      2^{(1-\gamma\loq 2) \ld n}\right)\right) \\
    &\leq \loq 2 \cdot \ld \left(\sum_{i=1}^{n} 2^{t_i}\right) + (\loq e)
    n^{1-\gamma\loq 2},
  \end{align*}
  and $1.4 < \loq 2 < 1.441$. 
\end{proof}

For $\gamma > 1$, the additional error decreases with growing $n$. Since the algorithm
is only useful if $n \geq 2$, choosing a sufficiently large constant
$\gamma$ yields the delay bound $1.441 \ld \left(\sum_{i=1}^n
  2^{t_i}\right) + 2.674$ with running time $\mathcal{O}(n \log^2
n)$. 

\section{Algorithm for Prefix Adder Circuits}\label{sec:bl-adder}
The na\"{\i}ve parallel prefix graph construction, in which all carry
bits are computed separately by a carry bit circuit, might contain a quadratic
number of gates. Therefore, Rautenbach et al.\  also developed a parallel prefix graph construction
computing all carry bits \cite{bonn2}.

\begin{theorem}[{\cite{bonn2}}] \label{thm:bl-ppfx} Given $n \in
  \mathbb{N}$ and arrival times $t_1, \dots, t_n \in \mathbb{N}_0$, there is a
  parallel prefix graph for $n$ inputs of size $\mathcal{O}(n \log
  \log n)$ with logic delay $2\log_2\left(\sum_{i=1}^n 2^{t_i}\right) + 6 \ld \ld n + \mathcal{O}(1).$
\end{theorem}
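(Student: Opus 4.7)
The plan is to build the parallel prefix graph by a recursive block decomposition. Partition the inputs $z_1,\dots,z_n$ into $m=\lceil\sqrt{n}\rceil$ consecutive blocks $B_1,\dots,B_m$ of size about $\sqrt{n}$, and let $T(n)$ denote the size of the resulting construction. The construction has three phases. In Phase~1, for each $j$ compute the block aggregate $Z_j$ (the $\circ$-composition of the $z_i$ with $i\in B_j$ in order) using the single-carry construction of Theorem~\ref{thm:bl-main}, costing $O(\sqrt{n})$ gates per block and $O(n)$ overall. In Phase~2, recursively apply the full parallel prefix construction to $Z_1,\dots,Z_m$, producing the block carries $C_j=Z_j\circ\cdots\circ Z_1$, at cost $T(\sqrt{n})$. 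In Phase~3, for each block $B_j$, recursively build a parallel prefix graph on $\{z_i:i\in B_j\}$ to obtain all within-block partial prefixes, and then combine each such partial prefix with $C_{j-1}$ by one additional prefix gate to produce $c_{i+1}$. This contributes $\sqrt{n}\cdot T(\sqrt{n})+O(n)$ to the size.

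The size recurrence $T(n)\le(\sqrt{n}+1)\,T(\sqrt{n})+O(n)$, upon dividing by $n$, becomes $T(n)/n\le T(\sqrt{n})/\sqrt{n}+O(1)$, which iterates $\ld\ld n$ times before reaching the base case, yielding $T(n)=\mathcal{O}(n\ld\ld n)$.

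For the delay, let $\tau_j$ denote the logic-gate delay at which $Z_j$ becomes available after Phase~1. By Theorem~\ref{thm:bl-main} applied inside $B_j$, one has $\tau_j\le 2\ld\bigl(\sum_{i\in B_j}2^{t_i}\bigr)+\mathcal{O}(1)$. The key observation is a telescoping bound of the form $\sum_j 2^{\tau_j/2}\le c\cdot\sum_i 2^{t_i}$ for some absolute constant $c$, so that the ``lower-bound mass'' $\sum 2^{t}$ entering the recursive call in Phase~2 is only a constant factor larger than the original. Consequently the $2W$ term is preserved across recursive calls, and each of the $\ld\ld n$ levels adds only a bounded additive logic-delay overhead coming from Phases~1 and~3 (namely, a constant number of extra prefix gates on the critical path, each of which costs at most $2$ logic-gate levels by Proposition~\ref{lem:pfx-to-logic}). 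A careful constant-counting then yields the stated $2\ld\bigl(\sum_i 2^{t_i}\bigr)+6\ld\ld n+\mathcal{O}(1)$.

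The hard part will be the delay accounting: one has to show that the arrival times $\tau_j$ feeding into the recursive call do not cause $\ld\bigl(\sum_j 2^{\tau_j}\bigr)$ to exceed $\ld\bigl(\sum_i 2^{t_i}\bigr)$ by more than a bounded amount per level, and to bound the per-level overhead by exactly six logic-gate levels. This likely forces one to choose block boundaries that are \emph{not} uniform by position but are weighted by the arrival times $t_i$, so that each block's local lower bound $\ld(\sum_{i\in B_j}2^{t_i})$ is roughly balanced; it also requires exploiting the asymmetry of the prefix operator (depth $2$ on the right input versus $1$ on the left, cf.\ Proposition~\ref{lem:pfx-to-logic}) to schedule the combining gate of Phase~3 so that the slower of $P_i$ and $C_{j-1}$ sits on the left input.
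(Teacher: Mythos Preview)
Your construction and the size recurrence are correct and coincide with the paper's (and \cite{bonn2}'s) block decomposition. The problem is in the delay telescoping. The inequality $\sum_j 2^{\tau_j/2}\le c\sum_i 2^{t_i}$ is true, but it does \emph{not} say that ``the $2W$ term is preserved across recursive calls.'' What enters the recursive call in Phase~2 are the logic arrival times $\tau_j$ themselves, so the quantity governing the next level is $\sum_j 2^{\tau_j}$, not $\sum_j 2^{\tau_j/2}$. From $\tau_j\le 2\ld\bigl(\sum_{i\in B_j}2^{t_i}\bigr)+C$ you only obtain $2^{\tau_j}\le 2^{C}\bigl(\sum_{i\in B_j}2^{t_i}\bigr)^{2}$, so $\sum_j 2^{\tau_j}$ can be of order $\bigl(\sum_i 2^{t_i}\bigr)^{2}$; one recursion level then doubles $W$ rather than shifting it by a constant, and the induction fails. (Take all $t_i=0$ in a single block: $\tau_1\approx 2\ld n$, hence $2^{\tau_1}\approx n^{2}$ while $\sum_i 2^{t_i}=n$.)

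The fix is precisely what the paper alludes to when it says that \cite{bonn2} ``minimize[s] the delay of the prefix graph instead of the underlying logic circuit'': run the entire recursion in \emph{prefix-graph} delay, where a single-carry prefix tree achieves depth $\bigl\lceil\ld\bigl(\sum_{i\in B_j}2^{t_i}\bigr)\bigr\rceil$, so that $2^{\tau_j}\le 2\sum_{i\in B_j}2^{t_i}$ and the mass $\sum 2^{t}$ is multiplied by only a constant per level. This yields prefix-graph delay $\ld\bigl(\sum_i 2^{t_i}\bigr)+3\ld\ld n+\mathcal{O}(1)$, and doubling via Proposition~\ref{lem:pfx-to-logic} gives the stated logic bound with the factor~$6$. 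The paper's own sharpening (Lemma~\ref{lem:delay-bl-pfx}) performs the analogous telescoping directly in logic delay but with base~$\varphi$, using that Theorem~\ref{thm:my-pfx} gives $\varphi^{\tau_j}\le \varphi^{3}\sum_{i\in B_j}\varphi^{t_i}$, whence $w=\sum_i\varphi^{t_i}$ grows by at most $\varphi^{3}$ per level. Neither version needs arrival-time-weighted block boundaries, and you cannot reorient the Phase~3 prefix gate anyway: associativity of $\circ$ forces the lower-index operand $C_{j-1}$ to be the right (depth-$2$) input. Uniform consecutive blocks and the standard orientation already suffice.
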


The primary objective in \cite{bonn2} is to minimize the delay of the
prefix graph instead of the underlying logic circuit.  We will improve
the performance guarantee for a similar construction as in
\cite{bonn2} by using a carry bit circuit as in
Section~\ref{sec:bl-core} as a subroutinte.

Given inputs $t_1, \dots, t_n$, we partition the set $\{1, \dots, n\}$
into $l = \ceil{\sqrt{n}}$ subsets $V_1, \dots, V_l$, each containing
$l$ or $l-1$ consecutive indices. Let $Z_i = \circ_{j \in V_i} z_j$,
where $Z_i$ is computed by a circuit constructed by the carry bit
algorithm. This is shown in green, labeled ``Best'', in
Figure~\ref{fig:bl-pfx}. The parallel prefix graph construction is
applied recursively to compute prefixes for all groups without their
highest index as well as for the $l-1$ inputs $Z_1, \dots, Z_{l-1}$
(which corresponds to the red boxes labeled ``Recursion'' in
Figure~\ref{fig:bl-pfx}), i.\ e.\ we build $l+1$ parallel prefix
graphs, each with at most $l-1$ inputs. As a final step, we combine
all prefixes from group $i$ with the $(i-1)$-th prefix of the $Z_i$
and add one more prefix gate combining $Z_l$ with the $(l-1)$-th
prefix of the $Z_i$. This yields a parallel prefix graph.

  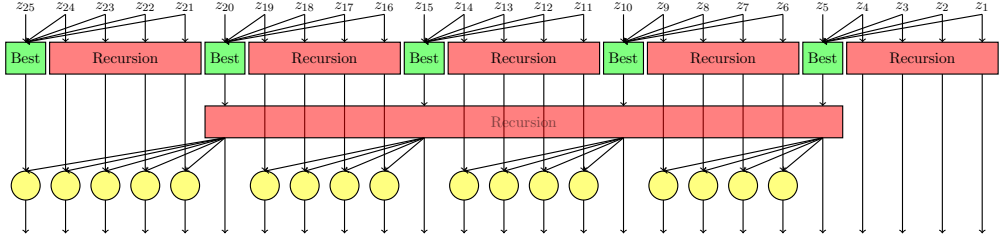
\begin{figure}[!tb]%
    \centering{%
      \resizebox{0.8\linewidth}{!}{%
        \begin{tikzpicture}
          \node at (1, 6.1) {$z_{25}$};
\node at (2, 6.1) {$z_{24}$};
\node at (3, 6.1) {$z_{23}$};
\node at (4, 6.1) {$z_{22}$};
\node at (5, 6.1) {$z_{21}$};
\node at (6, 6.1) {$z_{20}$};
\node at (7, 6.1) {$z_{19}$};
\node at (8, 6.1) {$z_{18}$};
\node at (9, 6.1) {$z_{17}$};
\node at (10, 6.1) {$z_{16}$};
\node at (11, 6.1) {$z_{15}$};
\node at (12, 6.1) {$z_{14}$};
\node at (13, 6.1) {$z_{13}$};
\node at (14, 6.1) {$z_{12}$};
\node at (15, 6.1) {$z_{11}$};
\node at (16, 6.1) {$z_{10}$};
\node at (17, 6.1) {$z_{9}$};
\node at (18, 6.1) {$z_{8}$};
\node at (19, 6.1) {$z_{7}$};
\node at (20, 6.1) {$z_{6}$};
\node at (21, 6.1) {$z_{5}$};
\node at (22, 6.1) {$z_{4}$};
\node at (23, 6.1) {$z_{3}$};
\node at (24, 6.1) {$z_{2}$};
\node at (25, 6.1) {$z_{1}$};

\node[outer sep=0pt, draw, fill=red!50, thick, rectangle, scale=1, minimum
width=3.8cm, minimum height=0.8cm] at (3.5,4.8)(g2){Recursion};
\node[outer sep=0pt, draw, fill=green!50, thick, rectangle, scale=1, minimum
height=0.8cm, minimum width=0.8cm] at (1,4.8)(g1){Best};
\node[outer sep=0pt, draw, fill=red!50, thick, rectangle, scale=1, minimum
width=3.8cm, minimum height=0.8cm] at (8.5,4.8)(g4){Recursion};
\node[outer sep=0pt, draw, fill=green!50, thick, rectangle, scale=1, minimum
height=0.8cm, minimum width=0.8cm] at (6,4.8)(g3){Best};
\node[outer sep=0pt, draw, fill=red!50, thick, rectangle, scale=1, minimum
width=3.8cm, minimum height=0.8cm] at (13.5,4.8)(g6){Recursion};
\node[outer sep=0pt, draw, fill=green!50, thick, rectangle, scale=1, minimum
height=0.8cm, minimum width=0.8cm] at (11,4.8)(g5){Best};
\node[outer sep=0pt, draw, fill=red!50, thick, rectangle, scale=1, minimum
width=3.8cm, minimum height=0.8cm] at (18.5,4.8)(g8){Recursion};
\node[outer sep=0pt, draw, fill=green!50, thick, rectangle, scale=1, minimum
height=0.8cm, minimum width=0.8cm] at (16,4.8)(g7){Best};
\node[outer sep=0pt, draw, fill=red!50, thick, rectangle, scale=1, minimum
width=3.8cm, minimum height=0.8cm] at (23.5,4.8)(g10){Recursion};
\node[outer sep=0pt, draw, fill=green!50, thick, rectangle, scale=1, minimum
height=0.8cm, minimum width=0.8cm] at (21,4.8)(g9){Best};

\path[->,thick] (1,5.9) edge (1,5.2);
\path[->,thick] (2,5.9) edge (2,5.2);
\path[->,thick] (3,5.9) edge (3,5.2);
\path[->,thick] (4,5.9) edge (4,5.2);
\path[->,thick] (5,5.9) edge (5,5.2);
\path[->,thick] (6,5.9) edge (6,5.2);
\path[->,thick] (7,5.9) edge (7,5.2);
\path[->,thick] (8,5.9) edge (8,5.2);
\path[->,thick] (9,5.9) edge (9,5.2);
\path[->,thick] (10,5.9) edge (10,5.2);
\path[->,thick] (11,5.9) edge (11,5.2);
\path[->,thick] (12,5.9) edge (12,5.2);
\path[->,thick] (13,5.9) edge (13,5.2);
\path[->,thick] (14,5.9) edge (14,5.2);
\path[->,thick] (15,5.9) edge (15,5.2);
\path[->,thick] (16,5.9) edge (16,5.2);
\path[->,thick] (17,5.9) edge (17,5.2);
\path[->,thick] (18,5.9) edge (18,5.2);
\path[->,thick] (19,5.9) edge (19,5.2);
\path[->,thick] (20,5.9) edge (20,5.2);
\path[->,thick] (21,5.9) edge (21,5.2);
\path[->,thick] (22,5.9) edge (22,5.2);
\path[->,thick] (23,5.9) edge (23,5.2);
\path[->,thick] (24,5.9) edge (24,5.2);
\path[->,thick] (25,5.9) edge (25,5.2);
\path[->,thick] (2,5.9) edge (1,5.2);
\path[->,thick] (3,5.9) edge (1,5.2);
\path[->,thick] (4,5.9) edge (1,5.2);
\path[->,thick] (5,5.9) edge (1,5.2);
\path[->,thick] (7,5.9) edge (6,5.2);
\path[->,thick] (8,5.9) edge (6,5.2);
\path[->,thick] (9,5.9) edge (6,5.2);
\path[->,thick] (10,5.9) edge (6,5.2);
\path[->,thick] (12,5.9) edge (11,5.2);
\path[->,thick] (13,5.9) edge (11,5.2);
\path[->,thick] (14,5.9) edge (11,5.2);
\path[->,thick] (15,5.9) edge (11,5.2);
\path[->,thick] (17,5.9) edge (16,5.2);
\path[->,thick] (18,5.9) edge (16,5.2);
\path[->,thick] (19,5.9) edge (16,5.2);
\path[->,thick] (20,5.9) edge (16,5.2);
\path[->,thick] (22,5.9) edge (21,5.2);
\path[->,thick] (23,5.9) edge (21,5.2);
\path[->,thick] (24,5.9) edge (21,5.2);
\path[->,thick] (25,5.9) edge (21,5.2);

\node[outer sep=0pt, draw, fill=yellow!50, thick, circle, scale=2] at (20,1.6)(n20){};
\node[outer sep=0pt, draw, fill=yellow!50, thick, circle, scale=2] at (19,1.6)(n19){};
\node[outer sep=0pt, draw, fill=yellow!50, thick, circle, scale=2] at (18,1.6)(n18){};
\node[outer sep=0pt, draw, fill=yellow!50, thick, circle, scale=2] at (17,1.6)(n17){};
\node[outer sep=0pt, draw, fill=yellow!50, thick, circle, scale=2] at (15,1.6)(n15){};
\node[outer sep=0pt, draw, fill=yellow!50, thick, circle, scale=2] at (14,1.6)(n14){};
\node[outer sep=0pt, draw, fill=yellow!50, thick, circle, scale=2] at (13,1.6)(n13){};
\node[outer sep=0pt, draw, fill=yellow!50, thick, circle, scale=2] at (12,1.6)(n12){};
\node[outer sep=0pt, draw, fill=yellow!50, thick, circle, scale=2] at (10,1.6)(n10){};
\node[outer sep=0pt, draw, fill=yellow!50, thick, circle, scale=2] at (9,1.6)(n9){};
\node[outer sep=0pt, draw, fill=yellow!50, thick, circle, scale=2] at (8,1.6)(n8){};
\node[outer sep=0pt, draw, fill=yellow!50, thick, circle, scale=2] at (7,1.6)(n7){};
\node[outer sep=0pt, draw, fill=yellow!50, thick, circle, scale=2] at (5,1.6)(n5){};
\node[outer sep=0pt, draw, fill=yellow!50, thick, circle, scale=2] at (4,1.6)(n4){};
\node[outer sep=0pt, draw, fill=yellow!50, thick, circle, scale=2] at (3,1.6)(n3){};
\node[outer sep=0pt, draw, fill=yellow!50, thick, circle, scale=2] at (2,1.6)(n2){};
\node[outer sep=0pt, draw, fill=yellow!50, thick, circle, scale=2] at (1,1.6)(n1){};

\path[->,thick] (6,4.4) edge (6,3.6);
\path[->,thick] (6,2.8) edge (6,0.4);
\path[->,thick] (6,2.8) edge (n5.north);
\path[->,thick] (6,2.8) edge (n4.north);
\path[->,thick] (6,2.8) edge (n3.north);
\path[->,thick] (6,2.8) edge (n2.north);
\path[->,thick] (6,2.8) edge (n1.north);
\path[->,thick] (11,4.4) edge (11,3.6);
\path[->,thick] (11,2.8) edge (11,0.4);
\path[->,thick] (11,2.8) edge (n10.north);
\path[->,thick] (11,2.8) edge (n9.north);
\path[->,thick] (11,2.8) edge (n8.north);
\path[->,thick] (11,2.8) edge (n7.north);
\path[->,thick] (16,4.4) edge (16,3.6);
\path[->,thick] (16,2.8) edge (16,0.4);
\path[->,thick] (16,2.8) edge (n15.north);
\path[->,thick] (16,2.8) edge (n14.north);
\path[->,thick] (16,2.8) edge (n13.north);
\path[->,thick] (16,2.8) edge (n12.north);
\path[->,thick] (21,4.4) edge (21,3.6);
\path[->,thick] (21,2.8) edge (21,0.4);
\path[->,thick] (21,2.8) edge (n20.north);
\path[->,thick] (21,2.8) edge (n19.north);
\path[->,thick] (21,2.8) edge (n18.north);
\path[->,thick] (21,2.8) edge (n17.north);

\path[->,thick] (1,4.4) edge (n1.north);
\path[->,thick] (n1.south) edge (1,0.4);
\path[->,thick] (2,4.4) edge (n2.north);
\path[->,thick] (n2.south) edge (2,0.4);
\path[->,thick] (3,4.4) edge (n3.north);
\path[->,thick] (n3.south) edge (3,0.4);
\path[->,thick] (4,4.4) edge (n4.north);
\path[->,thick] (n4.south) edge (4,0.4);
\path[->,thick] (5,4.4) edge (n5.north);
\path[->,thick] (n5.south) edge (5,0.4);
\path[->,thick] (7,4.4) edge (n7.north);
\path[->,thick] (n7.south) edge (7,0.4);
\path[->,thick] (8,4.4) edge (n8.north);
\path[->,thick] (n8.south) edge (8,0.4);
\path[->,thick] (9,4.4) edge (n9.north);
\path[->,thick] (n9.south) edge (9,0.4);
\path[->,thick] (10,4.4) edge (n10.north);
\path[->,thick] (n10.south) edge (10,0.4);
\path[->,thick] (12,4.4) edge (n12.north);
\path[->,thick] (n12.south) edge (12,0.4);
\path[->,thick] (13,4.4) edge (n13.north);
\path[->,thick] (n13.south) edge (13,0.4);
\path[->,thick] (14,4.4) edge (n14.north);
\path[->,thick] (n14.south) edge (14,0.4);
\path[->,thick] (15,4.4) edge (n15.north);
\path[->,thick] (n15.south) edge (15,0.4);
\path[->,thick] (17,4.4) edge (n17.north);
\path[->,thick] (n17.south) edge (17,0.4);
\path[->,thick] (18,4.4) edge (n18.north);
\path[->,thick] (n18.south) edge (18,0.4);
\path[->,thick] (19,4.4) edge (n19.north);
\path[->,thick] (n19.south) edge (19,0.4);
\path[->,thick] (20,4.4) edge (n20.north);
\path[->,thick] (n20.south) edge (20,0.4);
\path[->,thick] (22,4.4) edge (22,0.4);
\path[->,thick] (23,4.4) edge (23,0.4);
\path[->,thick] (24,4.4) edge (24,0.4);
\path[->,thick] (25,4.4) edge (25,0.4);

\node[outer sep=0pt, draw, fill=red, fill opacity=.5, text opacity = 1, thick, rectangle, scale=1, minimum
width=16cm, minimum height=0.8cm] at (13.5,3.2)(g2){Recursion};
        \end{tikzpicture}%
      }%
      \caption{Prefix graph construction}%
      \label{fig:bl-pfx}%
    }%
  \end{figure}%

The following two lemmas analyze the size of the resulting parallel prefix graph
and the running time of its construction.

\begin{lemma} \label{lem:bl-ppfx-size-app}
  The parallel prefix graph in \cite{bonn2} and the modified
  construction above  have the same size; for
  $n\geq 3$, it is bounded by $2n\ld \ld n$ in terms of prefix gates
  and $6n \ld \ld n$ in terms of logic gates.
\end{lemma}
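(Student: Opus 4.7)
The plan is to reduce the claim to the size analysis already carried out in \cite{bonn2} by first establishing structural equivalence, and then convert the bound from prefix gates to logic gates. First, I would observe that the modified construction and the construction in \cite{bonn2} share the same recursive skeleton: both partition the indices into $l = \lceil \sqrt{n} \rceil$ consecutive groups $V_1, \dots, V_l$, recurse once on each $V_i \setminus \{\max V_i\}$, once on the sequence $Z_1, \dots, Z_{l-1}$, and finish by combining partial prefixes with $Z$-prefixes (plus one extra gate for $Z_l$). The only substantive difference is the subroutine used to build each $Z_i$: our variant uses the carry bit algorithm of Section~\ref{sec:bl-core}, whereas \cite{bonn2} uses a different prefix tree. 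By Lemma~\ref{lem:small-bl}, any prefix tree on $|V_i|$ inputs consists of exactly $|V_i|-1$ prefix gates, so swapping the subroutine does not change the total prefix gate count, and the two constructions have the same size.

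Second, I would derive the bound $S(n) \leq 2n \ld \ld n$ on the prefix gate count directly from the recursion. Summing the three contributions (the $l+1$ recursive calls on subproblems of size at most $l-1$; the $\sum_{i=1}^l (|V_i|-1) = n-l$ gates inside the $Z_i$ trees; and the at most $(n-|V_1|) - (l-1) + 1 \leq n - 2l + 3$ combining gates in the final step) yields
\[
  S(n) \;\leq\; (l+1)\, S(l-1) + 2(n - l) + 1.
\]
Induction on $n$, using $(l+1)(l-1) = l^2 - 1 \leq n + O(\sqrt n)$ together with $\ld \ld (l-1) \leq \ld \ld n - 1$, then closes the bound, with small cases verified by direct inspection.

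Third, each prefix gate expands into exactly three logic gates (two \textsc{And}s and one \textsc{Or}, cf.\ Figure~\ref{fig:pfx-to-gate3}), so the logic gate bound $6n \ld \ld n$ follows immediately by multiplication from the prefix gate bound.

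The main obstacle will be closing the induction with the precise constant $2$. Substituting $S(l-1) \leq 2(l-1) \ld \ld(l-1)$ naively into the recurrence leaves a residual term of order $\sqrt n \cdot \ld\ld n$ caused by the rounding slack $l \leq \sqrt n + 1$; handling it requires either a refined partition analysis exploiting that the overhead is in fact $2n - 3l + O(1)$ rather than $2n - 2l$, or a sufficient number of base cases so that the lower-order error is absorbed. An alternative that bypasses this difficulty is to invoke \cite{bonn2} directly once the equivalence of Phase~1 is in hand, since the bound there is exactly what we need.
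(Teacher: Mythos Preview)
Your structural-equivalence argument (any prefix tree on $|V_i|$ inputs has exactly $|V_i|-1$ prefix gates, so swapping the subroutine leaves the count unchanged) and the factor-of-three conversion to logic gates are correct and match the paper. The divergence is in the inductive step for the bound $2n\ld\ld n$.

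Your recurrence $S(n)\le (l+1)\,S(l-1)+2(n-l)+1$ is valid but too coarse: replacing all $l+1$ recursive calls by the uniform upper size $l-1$ introduces the overcount $(l+1)(l-1)-\bigl((l-1)+\sum_i(|V_i|-1)\bigr)=(l^2-1)-(n-1)$, which is of order $\sqrt n$ when $l=\lceil\sqrt n\rceil$. This is exactly the residual $\sqrt n\cdot\ld\ld n$ you flag, and neither a sharper overhead count nor more base cases removes it, because it is multiplicative inside the inductive term rather than additive outside it.

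The paper avoids this by \emph{not} collapsing the calls: it keeps the recursive contributions separate as $2(l-1)\ld\ld(l-1)+\sum_{i=1}^l 2(|V_i|-1)\ld\ld(|V_i|-1)$ and uses the identity $(l-1)+\sum_i(|V_i|-1)=n-1\le n$ together with $|V_i|-1\le l-1$ to bound the whole sum by $2n\,\ld\ld(l-1)\le 2n\,\ld\ld\sqrt n = 2n(\ld\ld n-1)$. Adding the at most $2n$ non-recursive gates then gives $2n\ld\ld n$ with no slack to absorb. So the fix to your argument is to track the \emph{sum} of recursive input sizes (which is at most $n$) rather than the \emph{number} of calls times a uniform size bound. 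Your fallback of citing \cite{bonn2} is logically valid but forfeits the self-contained proof that the paper actually supplies.
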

\begin{proof}
  Consider Figure~\ref{fig:sqrt} and proceed by induction on the
  number of inputs. On a level with $n$ inputs, the number of green
  gates and the number of yellow gates are both at most $n$. The total
  number of inputs of recursion blocks can be bounded by $n$ as well:
  if there are $l$ groups, then $n-l$ original inputs are inputs of
  recursion blocks; one further recursion block has $l-1$ inputs. For
  small $n$, the correctness follows from Figure~\ref{fig:sqrt}, e.\
  g.\ for $n = 3$, the size bound is $5$ and $3$ gates are required.

  Let $V_1, \dots, V_l$ be the groups, $l\geq 3$, then by induction
  hypothesis, the prefix gate size is bounded by
  \begin{align*}
    &\phantom{= } 2n + 2(l-1) \ld \ld (l-1) + \sum_{i=1}^l 2(|V_i|-1)
    \ld \ld (|V_i|-1) \\
    &\leq 2n + \sum_{i=1}^l 2|V_i| \ld \ld (l-1) \\
    &\leq 2n + 2n \ld \ld \sqrt{n} = 2 n \ld \ld n.
  \end{align*}
  For logic gates, the size increases by a factor of three. 
\end{proof}

\begin{lemma} \label{lem:rt-app}
  The parallel prefix graph above can be
  computed in $\mathcal{O}(n \log^2 n)$ time. 
\end{lemma}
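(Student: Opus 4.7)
The plan is to set up a recurrence for the running time $T(n)$ of the construction on $n$ inputs and solve it by exploiting the geometric decay of the $(\log s)^2$ factor across recursion levels. At the top level with $n$ inputs, the construction performs three kinds of work: (i) it invokes the single carry bit algorithm of Theorem~\ref{thm:fast-pfx} on each of the $\ceil{\sqrt{n}}$ groups, each of size at most $\ceil{\sqrt{n}}$, costing $\mathcal{O}(\sqrt{n}\log^2\sqrt{n})$ per group and hence $\mathcal{O}(n\log^2 n)$ in total at this level; (ii) it makes $\ceil{\sqrt{n}}+1$ recursive calls, each on at most $\ceil{\sqrt{n}}-1$ inputs; (iii) it performs an $\mathcal{O}(n)$-time combining step that introduces one prefix gate per original position. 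This yields the recurrence
\[
T(n) \le (\ceil{\sqrt{n}}+1)\, T(\ceil{\sqrt{n}}-1) + c\, n\log^2 n
\]
for some constant $c$.

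To solve it, I would analyze the work contributed at each recursion depth $d$ separately. A key observation is that each subproblem of size $s$ spawns $\ceil{\sqrt{s}}+1$ recursive subproblems whose sizes sum to $s-1$ (the $l$ groups contribute $s-l$ inputs after losing their highest index, and the additional call on $(Z_1,\dots,Z_{l-1})$ contributes $l-1$), so the total input size summed across all active subproblems is nonincreasing in $d$ and thus bounded by $n$ at every depth. Moreover, every subproblem at depth $d$ has size at most $n^{1/2^d}$, so its per-input carry bit cost from Theorem~\ref{thm:fast-pfx} is $\mathcal{O}((\log n^{1/2^d})^2) = \mathcal{O}((\log n)^2 / 4^d)$. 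Therefore the total work at depth $d$ is $\mathcal{O}(n (\log n)^2 / 4^d)$, and summing this geometric series over $d = 0, 1, \dots, \mathcal{O}(\log\log n)$ depths gives the overall bound $\mathcal{O}(n\log^2 n)$.

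The main obstacle will be ensuring that the work per level actually decays geometrically rather than merely being uniformly bounded by the top-level $\mathcal{O}(n\log^2 n)$: this hinges on the $(\log s)^2$ dependence in Theorem~\ref{thm:fast-pfx} combined with the fact that subproblem sizes are roughly square roots of their parents'. Once the nonincreasing total-size bound and the $4^{-d}$ per-input cost decay are both in place, the final geometric sum is routine and avoids any appeal to the Master theorem, which would not apply cleanly here because of the additive $(+1)$ in the branching factor and the $\log^2$ factor in the non-recursive work.
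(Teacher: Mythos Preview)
Your argument is correct and follows essentially the same route as the paper: both bound the total input size at each recursion depth by $n$, observe that subproblem sizes at depth $d$ are at most $n^{1/2^d}$, and sum the resulting geometric series over the $\mathcal{O}(\log\log n)$ levels. The only cosmetic difference is that the paper performs one global rounding of arrival times up front (so that all numbers have $\mathcal{O}(\log n)$ bits and the Lemma~\ref{lem:fast-bl} algorithm costs $\mathcal{O}(s\log s)$ operations per group, giving a $2^{-d}$ decay that is then multiplied by the $\mathcal{O}(\log n)$ bit cost), whereas you invoke Theorem~\ref{thm:fast-pfx} as a black box per group and obtain the $4^{-d}$ decay of the $\log^2$ factor directly.
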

\begin{proof}
  As in Theorem~\ref{thm:fast-pfx}, we round all running times up to
  at least $\max t_i - \gamma \ceil{\ld n}$ for a fixed $\gamma >
  1$. For this arrival time profile, we have already shown that
  $$1.441\left(\sum_{i=1}^n 2^{t_i'}\right)  \leq  1.441 \ld
  \left(\sum_{i=1}^n 2^{t_i}\right) + 2.1 \cdot n^{1-1.4\gamma}.$$
  Therefore, we can use the rounded arrival time profile to achieve a
  delay guarantee of 
  $$1.441 \ld
  \left(\sum_{i=1}^n 2^{t_i}\right) + 5 \ld \ld n + 4.5$$ for $\gamma
  = 3$. This means that all numbers in the computations have size at
  most $\mathcal{O}(\log n)$, thus it remains to bound the number of
  operations by $\mathcal{O}(n \log n)$. 

  For each level $l = 1, \dots, \ld \ld n$ of the recursion, we have a
  partition of the $n$ inputs into groups, where the maximum group
  size is bounded by $n^{1/2^l}$. Therefore, the prefix trees for the
  $Z_i$ can be computed in $\mathcal{O}\left(n \log
    \left(n^{1/2^l}\right)\right) =
  \mathcal{O}\left(\left(\frac{1}{2}\right)^l n\log n\right)$
  time. All $Z_i$ require time $\mathcal{O}(n \log n)$, because this
  is a geometric series. All remaining gates are prefix gates; they
  have fixed positions, thus each of them requires only constant time
  to compute, and there are $\mathcal{O}(n \log \log n)$ such gates in
  total.
\end{proof}

The new parallel prefix graph construction is summarized in the
following theorem.
\begin{theorem} \label{thm:my-ppfx} Given $n \in \mathbb{N}$ and
  arrival times $t_1, \dots, t_n \in \mathbb{N}_0$, our algorithm
  finds a parallel prefix graph with logic gate delay at most
$$\log_{\varphi} \left(\sum_{i=1}^n \varphi^{t_i}\right) + 5 \log_2 \log_2 n +
4.5 \le 1.441 \ld  \left(\sum_{i=1}^n 2^{t_i}\right) + 5 \log_2 \log_2 n +
4.5.$$
It can be implemented with running time $\mathcal{O}(n \log^2 n)$ and
the computed circuit has size at most $6 n \ld \ld n$ in terms of logic gates. 
\end{theorem}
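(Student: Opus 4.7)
The size bound follows immediately from Lemma~\ref{lem:bl-ppfx-size-app}, and the running time from Lemma~\ref{lem:rt-app}, so the real task is the delay bound. My plan is to prove by induction on $n$ (equivalently, on the depth of the square-root recursion) the slightly stronger invariant that every output of the constructed parallel prefix graph has prefix-gate delay, in the sense of Proposition~\ref{lem:pfx-to-logic}, at most $\loq\left(\sum_i \varphi^{t_i}\right) + 5\,\ld\ld n + 3.5$, and that its output $(g,p)$ pair is \emph{skewed}, i.e.\ $p$ arrives at least one prefix-gate step before $g$. Proposition~\ref{lem:pfx-to-logic} then converts this into a logic-gate delay at most $\loq\left(\sum_i \varphi^{t_i}\right) + 5\,\ld\ld n + 4.5$, and the dual-logarithm form follows from $\loq x = \loq 2 \cdot \ld x \leq 1.441\,\ld x$. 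Skewedness is preserved by any prefix gate, since if both inputs satisfy it then the output generate delay $\max(g_l+1,g_r+2)$ exceeds the output propagate delay $\max(g_l,g_r)$ by at least one.

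For small $n$ the base case is handled by the single-carry-bit algorithm of Theorem~\ref{thm:my-pfx}, which yields prefix-gate delay at most $\loq\left(\sum_i \varphi^{t_i}\right) + 3$ and skewed outputs (by the construction in Lemma~\ref{lem:fib-pfx}); this fits the invariant as soon as $5\,\ld\ld n \geq -0.5$, and the finitely many smaller cases can be checked explicitly. For the inductive step I would split the critical path at each output at position $i$ in group $V_j$ into an intra-group branch and a cross-group branch. The intra-group branch, handled by the induction hypothesis applied to the recursive prefix graph on $V_j \setminus \{\mathrm{top}(V_j)\}$ (or by using $Z_j$ directly when $i$ is the top of its group), contributes prefix-gate delay at most $\loq\left(\sum_{k\in V_j,\,k\leq i}\varphi^{t_k}\right) + 5\,\ld\ld\sqrt{n} + 3.5$. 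The cross-group branch uses Theorem~\ref{thm:my-pfx} to bound each $Z_i$ by prefix-gate delay $\loq\left(\sum_{k\in V_i}\varphi^{t_k}\right) + 3$; invoking the induction hypothesis on the recursive prefix graph over $Z_1,\dots,Z_{l-1}$ with these arrival times, together with $\sum_{k<j}\varphi^{\tau_{Z_k}} \leq \varphi^{3}\sum_i \varphi^{t_i}$, gives cross-group prefix-gate delay at most $\loq\left(\sum_i \varphi^{t_i}\right) + 5\,\ld\ld\sqrt{n} + 6.5$.

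The final combining prefix gate adds $1$ to its left (intra, higher-index) input delay and $2$ to its right (cross, lower-index) input delay. Substituting $\ld\ld\sqrt{n} = \ld\ld n - 1$, the intra branch contributes at most $\loq\left(\sum_i \varphi^{t_i}\right) + 5\,\ld\ld n - 0.5$ and the cross branch at most $\loq\left(\sum_i \varphi^{t_i}\right) + 5\,\ld\ld n + 3.5$; their maximum matches the invariant, closing the induction. The main obstacle is precisely this constant-tracking: the factor $5$ in $5\,\ld\ld n$ is forced by having to absorb, at each level of the square-root recursion, both the $+3$ from the carry-bit circuits producing the $Z_i$ and the $+2$ from the combining prefix gate, while the $-5$ gained from $\ld\ld\sqrt{n} = \ld\ld n - 1$ exactly cancels it. Keeping the bookkeeping in prefix-gate delay and applying the $+1$ translation of Proposition~\ref{lem:pfx-to-logic} only once, at the very end, is what prevents an extra additive constant from bleeding through each of the $\ld\ld n$ recursion levels.
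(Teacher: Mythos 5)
Your delay induction is essentially the paper's own argument: the paper proves the delay bound via a recursion on the square-root construction (its Lemma~\ref{lem:delay-bl-pfx}) that, exactly as you do, bounds each $Z_i$ by $\loq\bigl(\sum_{j\in V_i}\varphi^{t_j}\bigr)+3$ under a skewed-arrival-time (i.e.\ prefix-gate) accounting, charges $+2$ for the combining gate, absorbs these against the $5$ gained from $\ld\ld\sqrt{n}=\ld\ld n-1$ per level, and converts to logic-gate delay by a single $+1$ at the very end; your per-output invariant with explicit skewedness is a reformulation of its worst-case function $\delay(w,n)$ together with its monotonicity argument, and the size and running-time parts are cited from the same two lemmas the paper uses.

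There is, however, one genuine mismatch in how you allocate the additive constant. The running time $\mathcal{O}(n\log^2 n)$ you cite from Lemma~\ref{lem:rt-app} is achieved (under linear-time arithmetic, which is the whole point of that lemma) by running the construction on capped arrival times $t_i'=\max\{t_i,\max_j t_j-\gamma\ceil{\ld n}\}$ as in Theorem~\ref{thm:fast-pfx}, so that all numbers have $\mathcal{O}(\log n)$ bits; this capping costs an extra $\loq e\cdot n^{1-\gamma}$ in the delay. The paper's delay lemma yields $\loq w+5\ld\ld n+4$ without the skewing assumption and reserves the remaining $0.5$ of the stated $4.5$ precisely to absorb this rounding error (with $\gamma=3$). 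Your constants ($3.5$ in the invariant plus $+1$ from Proposition~\ref{lem:pfx-to-logic}) already spend the full $4.5$ on the uncapped instance, so the delay bound you prove and the running time you invoke refer to two different runs of the algorithm, with no slack left to reconcile them. The fix is available inside your own induction: each level is delay-neutral ($+3$ from the $Z_i$, $+2$ from the combining gate, $-5$ from $\ld\ld\sqrt{n}=\ld\ld n-1$), and the base case from Theorem~\ref{thm:my-pfx} already meets the invariant with constant $3$; tightening the invariant to $+3$ gives logic delay $\loq\bigl(\sum_i\varphi^{t_i}\bigr)+5\ld\ld n+4$ for the capped instance's profile and leaves the needed $0.5$ for the capping error, recovering the theorem as stated.
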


This theorem implies that for $n$ sufficiently large, we have a
$1.441$-approximation algorithm in terms of the delay for a prefix
adder. The algorithm of \cite{bonn2} has a running time of
$\Omega(n^2)$, which the use of our carry bit algorithm improves to a
near-linear running time, even with linear-time addition.

To prove the delay bound, we assume that all arrival
times are skewed by one time unit. Under this assumption, let $w =
\sum_{i=1}^n \varphi^{t_i}$, and let $\delay(w, n)$ denote the maximum
delay for a circuit constructed as above with $n\geq 3$ inputs and an
arrival time profile leading to the same $w$. Then $\delay(w, n)+1$ is
an upper bound on the delay of the constructed circuit, and we have:

\begin{lemma} \label{lem:delay-bl-pfx} For $n$ input pairs with skewed arrival
  times $t_1, \dots, t_n$, let $w = \sum_{i=1}^n
  \varphi^{t_i}$. Then we have
  $$\delay(w, n) \leq \loq w + 5\ld\ld n + 3.$$
\end{lemma}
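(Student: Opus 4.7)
I will argue by induction on $n$, verifying small $n$ directly and using the recursive construction from Figure~\ref{fig:bl-pfx} for the inductive step. Fix $l = \ceil{\sqrt{n}}$, the partition $V_1, \dots, V_l$ into consecutive index groups, and let $w_i = \sum_{j \in V_i} \varphi^{t_j}$, so that $w = \sum_{i=1}^l w_i$. Because the inputs are skewed, the proof of Lemma~\ref{lem:fib-pfx} shows that the carry bit algorithm of Theorem~\ref{thm:my-pfx} produces $Z_i$ whose generate output has delay at most $\loq w_i + 3$ (one unit less than the non-skewed bound $+4$) and whose propagate output is one unit earlier, so the $Z_i$ themselves form a skewed sequence of inputs for the next stage.

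I would then invoke the induction hypothesis twice. Applied to the top recursion on $Z_1, \dots, Z_{l-1}$, whose $\varphi$-weighted sum is at most $\sum_{i=1}^{l-1} \varphi^{\loq w_i + 3} \leq \varphi^3 w$, it yields output delay $\loq(\varphi^3 w) + 5\ld\ld(l-1) + 3 = \loq w + 5\ld\ld(l-1) + 6$. Applied to each within-group recursion on $V_i \setminus \sset{\max V_i}$ (at most $l - 1$ inputs), it gives delay at most $\loq w_i + 5\ld\ld(|V_i|-1) + 3$.

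Every output of the full parallel prefix graph is produced either by a yellow prefix gate combining the top-recursion output (right/lower-index input) with a within-group prefix (left/higher-index input), or by the extra prefix gate combining $Z_l$ with the top-recursion output. By Proposition~\ref{lem:pfx-to-logic}, each such gate has delay $\max\{t_r + 2, t_l + 1\}$. Using $l - 1 \leq \sqrt{n}$ and hence $\ld\ld(l-1) \leq \ld\ld n - 1$, the right-input contribution becomes $\loq w + 5\ld\ld(l-1) + 8 \leq \loq w + 5\ld\ld n + 3$, while the left-input contribution $\loq w_i + 5\ld\ld(|V_i|-1) + 4$ is dominated by this same expression; the $Z_l$-combining gate is analogous.

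The main obstacle is that the constant $+3$ in the target only survives if two units of slack appear in precisely the right places. The first comes from the skewed analysis, saving one unit on each $Z_i$; without it the right-input contribution would end up as $\loq w + 5\ld\ld n + 4$. The second is the $\sqrt{n}$ partition, which delivers the estimate $5\ld\ld(l-1) \leq 5\ld\ld n - 5$ needed to absorb the $+5$ picked up when descending through one prefix gate $(+2)$ and one induction-hypothesis constant $(+3)$ at the top-recursion boundary. The base cases, where these asymptotic inequalities do not yet hold, must be verified by direct inspection of the construction.
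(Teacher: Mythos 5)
Your proposal is correct and follows essentially the same route as the paper: the paper also inducts on $n$ via the $\ceil{\sqrt{n}}$-group recursion, uses the skewed-arrival-time analysis to get delay $\loq\left(\sum_{j\in V_i}\varphi^{t_j}\right)+3$ for each $Z_i$, bounds the top recursion's weight by $\varphi^3 w$, adds $2$ for the final prefix gate on its lower input, and absorbs the constants through $\ld\ld\sqrt{n}=\ld\ld n-1$ (the paper phrases this slightly more compactly via the monotone function $\delay(w,n)$ rather than tracking the $w_i$ explicitly). The only step you defer rather than carry out is the base case $n\leq 3$, which the paper settles in one line by observing that $\loq w \geq \max_i t_i$, so the small construction has delay at most $\loq w + 3$.
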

\begin{proof}
  We may assume that the given arrival time profile achieves the
  maximum delay, i.\ e.\ for $t_1, \dots, t_n$, the construction
  actually has a delay of $\delay(w, n)$.
 
  By Theorem~\ref{thm:my-pfx} and using the assumption that the
  propagate signals arrive earlier than the generate signals, we can
  compute $Z_i$ with delay $\loq \left(\sum_{j\in V_i}
    \varphi^{t_j}\right) + 3$. Therefore, their prefix graph has delay
  at most $$\delay\left(\sum_{i=1}^{l} \varphi^{\loq \left(\sum_{j\in
          V_i} \varphi^{t_j}\right) + 3}, \ceil{\sqrt{n}}-1\right)
  =\delay\left(\varphi^3 \cdot \sum_{j=1}^{n} \varphi^{t_j},
    \ceil{\sqrt{n}}-1\right).$$ For each of the groups $V_i$
  containing $\ceil{\sqrt{n}}$ or $\ceil{\sqrt{n}}-1$ inputs, the
  prefix graph of all but its last input (highest index) has delay at
  most $$\delay(w, \ceil{\sqrt{n}}-1) \leq \delay(\varphi^3 w,
  \ceil{\sqrt{n}} -1 ),$$ as $\delay(w,n)$ is monotonically increasing
  in $w$ and $n$. Therefore, the combination of a prefix of one of the
  $V_i$ and the corresponding prefix of the $Z_i$ has logic gate delay
  at most $\delay(w, n) \leq \delay\left(\varphi^3 w,
    \ceil{\sqrt{n}}-1\right)+2$. 

  We prove the absolute delay estimate by induction on $n$. For $n
  \leq 3$, $\delay(w, n) - \log_{\varphi} w \leq 3$ for all input
  sequences with this parameter $w$ as $\loq w\geq \max_i
  t_i$. Therefore, for $n \geq 4$, $\delay(w,n)$ is bounded by
  \begin{align*}
    \delay(\varphi^3w, \ceil{\sqrt{n}}-1) + 2
    &\leq \loq (\varphi^3 w) + 5\ld \ld(\sqrt{n}) + 5
    \leq \loq w + 5 \ld(0.5 \ld n) + 8 \\
    &= \loq w + 5\ld\ld n - 5 + 8 
    = \loq w + 5\ld\ld n + 3.
\end{align*}
\end{proof}

Without assuming skewed arrival times, we achieve a delay bound
of $\loq w + 5 \ld \ld n + 4$. For $n=25$, an example is shown in
Figure~\ref{fig:sqrt}. Gates are colored by the part of the recursion
they represent; in this special case, some gates can be used to compute
the $Z_i$ as well as the group prefixes, hence they are both red and
green.

  \begin{figure}[!tb]%
    \centering{%
      \resizebox{0.8\linewidth}{!}{%
        \begin{tikzpicture}
          \input{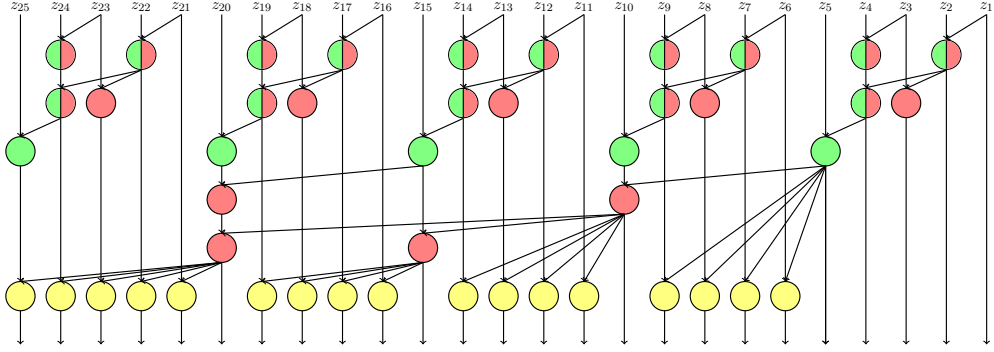}
        \end{tikzpicture}%
      }%
      \caption{Parallel prefix graph for uniform arrival times}%
      \label{fig:sqrt}%
    }%
  \end{figure}%

The construction in \cite{bonn2} and our variant of it both have a
very high fan-out; for $n$ inputs, the fan-out is at least
$\ceil{\sqrt{n}}$. In a physical implementation such a high fanout
induces a significant delay and requires the insertion of
duplicate gates into the interconnect to repeat the signals. The high
fan-outs occur precisely at the $Z_i$-prefixes, therefore they
accumulate on a critical path. For $n$ inputs, the fan-out can be
redistributed to duplicate gates with fan-out 2 using depth
$\frac{1}{2} \ceil{\ld n} + 1$; this will lead to an overall increase
in delay of $\ceil{\ld n} + \mathcal{O}(\ld \ld n)$ for a given path
\cite{bonn2}. Therefore, we obtain a $2.441$-approximation algorithm if
the fan-out is bounded by $2$, improving the $3$-approximation
achieved by \cite{bonn2} in this scenario.

\section{A Lower Bound for Prefix Adders} \label{sec:lb}
Lemma~\ref{lem:pfx-to-logic} shows that a lower bound for the delay of
a prefix tree for a single carry bit is given by an optimal binary
tree with depth one for the left child and depth two for the right
child in which the leaves represent inputs and their right-to-left
order corresponds to the ordering of the inputs. For zero arrival
times, this is achieved by a Fibonacci tree. Rautenbach et al.\
\cite{code-trees} observed that this a special case of a more general
concept: alphabetic code trees with unequal letter costs. These can be
used to obtain general lower bounds, which we improve and state
explicitly by using the specific properties of our application.

\begin{lemma}\label{lem:bl-lb}
  Given $n$ inputs with integral arrival times $t_1, \dots, t_n\in
  \mathbb{N}_0$, a prefix tree computing their carry bit $c_{n+1}$ has
  logic gate delay at least
  $$\delay(c_{n+1}) \geq \loq \left(\sum_{i=1}^n \varphi^{t_i}\right) - 1.$$
\end{lemma}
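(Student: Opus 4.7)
The plan is to combine Proposition~\ref{lem:pfx-to-logic} with a clean weighted induction on the prefix tree. Proposition~\ref{lem:pfx-to-logic} guarantees that the logic gate delay $D=\delay(c_{n+1})$ is at least the quantity $d$ obtained by propagating the input arrival times upward through the prefix tree using the rule $t=\max\{t_r+2,\,t_l+1\}$ at every prefix gate. Hence it suffices to lower bound $d$ at the root by the Fibonacci-weighted sum $\sum_{i=1}^n \varphi^{t_i}$, independently of the particular shape of the prefix tree.

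For each node $v$ of the prefix tree, let $d(v)$ denote its propagated delay and $W(v) = \sum_{i\in L(v)} \varphi^{t_i}$, where $L(v)$ is the set of leaves in the subtree rooted at $v$. I would prove by structural induction the invariant $\varphi^{d(v)} \geq W(v)$. The base case of a leaf with arrival time $t_i$ is immediate since $d(v)=t_i$ and $W(v)=\varphi^{t_i}$. For the inductive step at an internal node with children $L$ and $R$, the induction hypothesis gives $W(L)\leq \varphi^{d(L)}$ and $W(R)\leq \varphi^{d(R)}$, while the propagation rule forces $d(v)\geq d(L)+1$ and $d(v)\geq d(R)+2$. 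The crucial step then uses the golden ratio identity $\varphi^2 = \varphi + 1$:
\[
W(v) = W(L) + W(R) \leq \varphi^{d(v)-1} + \varphi^{d(v)-2} = \varphi^{d(v)-2}(\varphi + 1) = \varphi^{d(v)}.
\]
Applied at the root, this yields $d \geq \loq\left(\sum_{i=1}^n \varphi^{t_i}\right)$, and combining with $D \geq d$ from Proposition~\ref{lem:pfx-to-logic} gives the claimed bound comfortably (in fact with an extra unit of slack relative to the stated $-1$).

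There is essentially no obstacle once the weighted invariant $\varphi^{d(v)} \geq W(v)$ is identified; the whole argument collapses to the single numerical identity $\varphi^{-1} + \varphi^{-2} = 1$, which is exactly the fact that made the Fibonacci-tree construction of Section~\ref{sec:bl-core} tight. The only boundary situation worth a separate remark is $n = 1$, where the prefix tree is a single leaf, no prefix gate is built, and the bound reduces to the trivial inequality $t_1 \geq t_1 - 1$.
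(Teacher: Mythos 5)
Your proof is correct, but it takes a genuinely different route from the paper's. Both arguments start from the same reduction: by Proposition~\ref{lem:pfx-to-logic}, the logic delay $D$ is at least the quantity $d$ obtained from the asymmetric propagation rule $t=\max\{t_r+2,t_l+1\}$ on the prefix tree. From there the paper argues combinatorially: it replaces each input of arrival time $t_i$ by a Fibonacci tree with $F_{t_i+1}$ zero-arrival leaves, uses the fact that a tree of (asymmetric) depth $k$ has at most $F_{k+1}$ leaves, and then converts the leaf-count inequality $\sum_i F_{t_i+1}\le F_{k+1}$ into the stated bound via the closed form $F_m=\frac{1}{\sqrt5}(\varphi^m-\psi^m)$, which requires some bookkeeping of the $\psi$ error terms and is where the additive $-1$ is lost. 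You instead run a direct structural induction with the potential $W(v)=\sum_{i\in L(v)}\varphi^{t_i}$ and the invariant $\varphi^{d(v)}\ge W(v)$, which closes at each internal node through the single identity $\varphi^{-1}+\varphi^{-2}=1$; the only facts you need about the prefix tree are that its leaves are exactly the inputs $z_1,\dots,z_n$ (guaranteed by the paper's definition, cf.\ Lemma~\ref{lem:small-bl}) and that $d$ is at least the propagated delay at the root. Your version avoids all Binet-formula estimates and in fact proves the slightly stronger inequality $\delay(c_{n+1})\ge \loq\left(\sum_{i=1}^n\varphi^{t_i}\right)$, i.e.\ without the $-1$, which would reduce the paper's concluding additive optimality gap from $5$ to $4$; what the paper's argument buys in exchange is the explicit link to Fibonacci trees and alphabetic code trees and the reuse of the construction of Section~\ref{sec:bl-core}.
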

\begin{proof}
  In Section~\ref{sec:bl-core} we saw that $F_{t_i+1}$ inputs with
  arrival time zero can be combined with depth $t_i$. Therefore, an
  optimal prefix tree for inputs with arrival times $t_1, \dots, t_n$
  of delay $k$ can be restructured into a prefix tree with
  $\sum_{i=1}^n F_{t_i+1}$ inputs with depth $k$ by replacing input
  $i$ by a Fibonacci tree for $t_i + 1$. If there is only one input,
  the lemma is trivially true, thus we may assume $\sum_{i=1}^n
  F_{t_i+1}\geq 2$. But a tree of depth $k$ has at most $F_{k+1}$
  leaves, hence $k \geq 2$ and
\begin{align*}
k+1 = \loq \left(\varphi^{k+1}\right) 
    &\geq \loq \left(\sqrt{5} \left(F_{k+1} +
      \frac{\psi^3}{\sqrt{5}}\right)\right)
    \geq \loq \left(\sqrt{5} \left(\sum_{i=1}^n F_{t_i+1} +
      \frac{\psi^3}{\sqrt{5}}\right)\right)\\
    &\geq \loq \left(\sum_{i=1}^n \left(\varphi^{t_i+1} - 
       \psi^2 + \psi^3 \right)\right)
    = \loq \left(\sum_{i=1}^n \left(\varphi^{t_i+1} - 
       \varphi\psi^2 \right)\right)\\
    &\geq \loq \left(\varphi(1 - \psi^2) \cdot \sum_{i=1}^n \varphi^{t_i} \right)
    = \loq \left(\sum_{i=1}^n \varphi^{t_i} \right),
\end{align*}
and $k \geq \loq \left(\sum_{i=1}^n \varphi^{t_i} \right) - 1$ as
claimed. 
\end{proof}
This lemma shows that  the  single carry bit circuits in Section~\ref{sec:bl-core} 
have  optimum delay up to an additive margin of $5$.

\bibliographystyle{plain}

\end{document}